\newtheorem{theorem}{Theorem}[section]
\newtheorem{lemma}[theorem]{Lemma}
\newtheorem{corollary}[theorem]{Corollary}
\newtheorem{remark}[theorem]{Remark}
\newtheorem{example}[theorem]{Example}
\newtheorem{definition}[theorem]{Definition}
\def\Z{\mathbb Z}
\def\N{\mathbb N}
\def\R{\mathbb R}
\def\C{\mathbb C}
\def\CC{\mathcal C}
\def\S{\mathcal S}
\def\RR{\mathcal R}
\def\Select{\mathrm{Select}}
\def\SelectM{\mathrm{Select_M}}
\def\SelectD{\mathrm{Select_D}}
\def\Trunc{\mathrm{Trunc}}
\def\Digit{\mathrm{Digit}}
\def\new{\mathrm{new}}
\newcommand{\decdot}{\raisebox{0.1ex}{\textbf{.}}}
\author{
Christiane Frougny\affiliationmark{1} \and
Marta Pavelka\affiliationmark{2} \and
Edita Pelantov\'a\affiliationmark{2} \and
Milena Svobodov\'a\affiliationmark{2}}
\title{On-line algorithms for multiplication and division
in real and complex numeration systems}
\affiliation{IRIF, CNRS and Universit\'e Paris-Diderot
\\
Department of Mathematics, FNSPE, Czech Technical University in Prague}
\keywords{
On-line algorithm, numeration system, multiplication, division, preprocessing
}
\begin{document}

\publicationdetails{21}{2019}{3}{14}{4313}
\maketitle

\begin{abstract}

A positional numeration system is given by a base and by a set of digits. The base is a real or complex number $\beta$ such that $|\beta|>1$, and the digit set $\mathcal A$ is a finite set of real or complex digits (including $0$). Thus a number can be seen as a finite or infinite string of digits. An on-line algorithm processes the input piece-by-piece in a serial fashion. On-line arithmetic, introduced by Trivedi and Ercegovac, is a mode of computation where operands and results flow through arithmetic units in a digit serial manner, starting with the most significant digit.

In this paper, we first formulate a generalized version of the on-line algorithms for multiplication and division of Trivedi and Ercegovac for the cases that $\beta$ is any real or complex number, and digits are real or complex.
We then define the so-called OL Property, and show that if $(\beta, \mathcal A)$ has the OL Property, then on-line multiplication and division are realizable by the Trivedi-Ercegovac algorithms. For a real base $\beta$ and
a digit set $\mathcal A$ of contiguous integers, the system $(\beta, \mathcal A)$ has the OL Property if $\# \mathcal A > |\beta|$. For a complex base $\beta$ and symmetric digit set $\mathcal A$ of contiguous integers, the system  $(\beta, \mathcal A)$ has the OL Property if $\# \mathcal A > \beta\overline{\beta} + |\beta + \overline{\beta}|$.
Provided that addition and subtraction are realizable in parallel in the system $(\beta, \mathcal A)$ and that preprocessing of the divisor is possible,  our on-line algorithms for multiplication and division have linear time complexity.

Three examples are presented in detail: base $\beta=\frac{3+\sqrt{5}}{2}$ with digit set $\mathcal A=\{-1,0,1\}$; base $\beta=2i$ with digit set $\mathcal A = \{-2,-1, 0,1,2\}$ (redundant Knuth numeration system); and base $\beta = -\frac{3}{2} + i \frac{\sqrt{3}}{2} = -1 + \omega$, where $\omega = \exp{\frac{2i\pi}{3}}$, with digit set $\mathcal A = \{0, \pm 1, \pm \omega, \pm \omega^2 \}$ (redundant Eisenstein numeration system).
\end{abstract}


\section{Introduction}

A positional numeration system is given by a base and by a set of digits. The base is a real or complex number $\beta$ such that $|\beta|>1$, and the digit set (or alphabet) ${\mathcal{A}}$ is a finite set of real or complex digits (including $0$). The most studied numeration systems are of course the usual ones, where the base is a positive integer. But there have been also numerous studies where the base is an irrational real number (the so-called \emph{$\beta$-expansions}), a complex number, or a non-integer rational number, \textit{etc}. A survey can be found in~\cite[Chapter 2]{cant}. In that setting a number is seen as
a finite or infinite string of digits.

An on-line algorithm processes the input piece-by-piece in a serial fashion, i.e., in the order that the input is given to the algorithm, and the output is produced by the algorithm without having the entire input available from the beginning.

On-line arithmetic, introduced in~\cite{TrivediErcegovac}, is a mode of computation where operands and results flow through arithmetic units in a digit serial manner, starting with the most significant digit. To generate the first digit of the result, the first $\delta$ digits of the operands are required. The integer $\delta$ is called the delay of the algorithm. This technique allows for pipelining of different operations, such as addition, multiplication and division. It is also appropriate for the processing of real (or complex) numbers having infinite expansions: it is well known that when multiplying two real (or complex) numbers, only the left part of the result is significant. On-line arithmetic is used for special circuits such as in signal processing, and for very long precision arithmetic. An application to real-time control can be found in~\cite{DTHL}. One of the benefits of on-line computable functions is that they are continuous for the usual discrete topology on the set of infinite sequences on a finite alphabet. In order to be able to perform on-line computations, it is necessary to use a redundant numeration system, where a number may have more than one representation. A sufficient level of redundancy can also enable parallel addition and subtraction, which are used internally within the on-line multiplication and division algorithms. On the other hand, zero can have in some redundant number systems a non-trivial representation.  This fact requires to modify a representation of a divisor into a suitable form, usually called preprocessing of divisor.

On-line algorithms for multiplication and division in positive integer bases with a symmetric alphabet of integer digits
have been originally given by Trivedi and Ercegovac in~\cite{TrivediErcegovac}.
On-line algorithms for multiplication and division
in some complex numeration systems can be found in~\cite{NielsenMuller}, \cite{McIlhennyErcegovac}, and \cite{FrougnySurarerks}. In this paper we first formulate a generalized version of the on-line algorithms for multiplication and division of Trivedi and Ercegovac for the cases that $\beta$ is any real or complex number, and digits are real or complex.

 Let us say that a pair $(\beta, {\mathcal{A}})$ has the OL Property if there exists  a bounded set $I$  such that
the closure ${\rm cl}(I)$ of $I$  and the interior  $I^o$  of $I$   satisfy $\beta\, {\rm cl}(I) \subset \cup_{a\in {\mathcal{A}}}(I^o +a)$.
 We show that if $(\beta, {\mathcal{A}})$ has the OL Property and $0$ is in $I$ then on-line multiplication and division are realizable by the Trivedi-Ercegovac algorithms. Of course, the divisor inputting the Trivedi-Ercegovac algorithm for division has to be preprocessed into a suitable form.

We show that for a real base $\beta$ and an alphabet ${\mathcal{A}}$ of contiguous integers, the system $(\beta, {\mathcal{A}})$ has the OL Property if $\# {\mathcal{A}} > |\beta|$. For a complex base $\beta$ and a symmetric alphabet ${\mathcal{A}}$ of contiguous integers (convenient for parallel addition and subtraction), the system $(\beta, {\mathcal{A}})$ has the OL Property if $\# {\mathcal{A}} > \beta\overline{\beta} + |\beta + \overline{\beta}|$.

The key point of our algorithms is the specific choice of the functions $\Select$ performing the selection of the digits to output. The definitions of $\Select$ use only a reasonable approximation of its operands by a limited number of fractional digits --- here denoted by $L$ ---  of their $(\beta, {\mathcal{A}})$-representations. This allows, for some specific numeration systems, to perform evaluation of $\Select$ in constant time. In particular, we do not have to treat the real and the imaginary components separately in complex numeration systems.

Provided that addition and subtraction are realizable in parallel in the system $(\beta, {\mathcal{A}})$ (see \cite{FrPeSv1} for general results on this topic) and that preprocessing of the divisor is possible,  our on-line algorithms for multiplication and division have linear time complexity.

Three examples are presented in full detail:\\
1. $\beta=\frac{3+\sqrt{5}}{2}$ and ${\mathcal{A}}=\{-1,0,1\}$: on-line multiplication is possible with delay $\delta = 4$ and with $L= 3$, on-line division with delay $\delta = 6$ and with $L= 9$.\\
2. $\beta=2i$ and ${\mathcal{A}} = \{-2,-1, 0,1,2\}$ (redundant Knuth numeration system): on-line multiplication is possible with delay $\delta = 9$ and $L= 7$,
and on-line division with delay $\delta = 11$ and $L= 11$.\\
3. $\beta = -\frac{3}{2} + i \frac{\sqrt{3}}{2} = -1 + \omega$, where $\omega = \exp{\frac{2i\pi}{3}}$ is the third root of unity, and ${\mathcal{A}} = \{0, \pm 1, \pm \omega, \pm \omega^2 \}$ (redundant Eisenstein numeration system). Here we see that the parameters used in the algorithms are closely linked together and they are not uniquely determined. We present two couples of parameters for the multiplication algorithm: $(\delta, L) = (5, 7)$ and $(\delta, L) = (6, 6)$; and similarly for the division algorithm: $(\delta, L) =  (7, 10)$ and  $(\delta, L) = (10, 9)$.

A short preliminary version of this work has been presented in~\cite{BrFrPeSv}.


\section{Algorithms of Trivedi and Ercegovac}

The on-line multiplication and the on-line division algorithms we describe below are the same as the algorithms introduced by Trivedi and Ercegovac for computation in integer bases with a symmetric alphabet \cite{TrivediErcegovac, ErcegovacLang}. Our modification for non-standard numeration systems for arbitrary base $\beta$ (in general a complex number) and a alphabet ${\mathcal{A}}$ (in general a finite set of complex numbers) concerns only a  choice of the function $\Select$.

In the sequel, by a $(\beta, {\mathcal{A}})$-representation of a number $X$ we understand a (possibly infinite) string  $x_{-n}x_{-n+1}\cdots$ of digits $x_j \in {\mathcal{A}}$ such that  $X =
\sum_{j=-n}^{+\infty} x_j\beta^{-j}$; we also denote it $X =x_{-n}x_{-n+1}\cdots x_0 \decdot x_1x_2 \cdots$.


\subsection{On-line multiplication algorithm}

The algorithm for on-line multiplication in a numeration system $(\beta,{\mathcal{A}})$ has one parameter, namely the delay $\delta\in \N$,  $\delta \geq 1$,   which is specified later. The $\Select$ function is here called $\SelectM$, and has just one variable.

We work with $(\beta,{\mathcal{A}})$-representations of the numbers $X=\sum_{j=1}^\infty x_j\beta^{-j}$ and $Y=\sum_{j=1}^\infty y_j\beta^{-j}$, and their product $P=\sum_{j=1}^\infty p_j\beta^{-j}$. Their partial sums are denoted by $X_k =\sum_{j=1}^kx_j\beta^{-j}$, $Y_k=\sum_{j=1}^ky_j\beta^{-j}$, and $P_k = \sum_{j=1}^k p_j \beta^{-j}$.

The inputs of the algorithm are two (possibly infinite) strings
$$
0\decdot x_1x_2 \cdots x_\delta x_{\delta +1}x_{\delta +2}\cdots \quad \text{ with } x_j \in {\mathcal{A}} \text{ and } x_1 = x_2 = \cdots = x_\delta = 0 , \text{and}
$$
$$
0\decdot y_1y_2 \cdots y_\delta y_{\delta +1}y_{\delta +2}\cdots \quad \text{ with } y_j \in {\mathcal{A}} \text{ and } y_1 = y_2 = \cdots = y_\delta = 0 \, . \text{\ \ \ }
$$

The output is a (possibly infinite) string $ 0\decdot p_1p_2  p_{3}\cdots$ corresponding to a $(\beta,{\mathcal{A}})$-represen\-tation of the product $P = X\cdot Y = \sum_{j=1}^{\infty}p_j\beta^{-j}$. The settings of the algorithm ensure that the representation of $P$ indeed starts only on the right of the fractional point.

We carry out the on-line multiplication in iterative steps. To start with, set $ W_0 = X_0 = Y_0 = p_0=0 \, .$ At the $k$-th step of the iteration (starting from $k=1$) we compute:
\begin{equation} \label{eq:N6}
    W_k = \beta ( W_{k-1} - p_{k-1} ) + ( x_k Y_{k-1} + y_k X_k ) \ , \quad \text{and} \quad p_k = \SelectM(W_{k}) \in {\mathcal{A}} \, .
\end{equation}

\begin{lemma}\label{bounMult}
The definition \eqref{eq:N6} of $W_k$ and $p_k$ implies that, for any $k \geq 1$:
\begin{equation} \label{eq:W_k_mult}
    W_k = \beta^k(X_kY_k-P_{k-1}) \, .
\end{equation}
Moreover, if the sequence $(W_k)$ is bounded, then
$$ X\cdot Y=\lim_{k \rightarrow \infty} X_kY_k = \lim_{k \rightarrow \infty} P_k = P \, . $$
\end{lemma}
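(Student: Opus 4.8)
The plan is to prove \eqref{eq:W_k_mult} by induction on $k$, and then to derive the limit statement from \eqref{eq:W_k_mult} together with the boundedness hypothesis. First I would establish the base case $k=1$: since $W_0 = X_0 = Y_0 = p_0 = 0$, the recurrence \eqref{eq:N6} gives $W_1 = \beta(0-0) + (x_1 Y_0 + y_1 X_1) = y_1 X_1 = y_1 x_1 \beta^{-1}$, while $X_1 Y_1 = x_1 \beta^{-1} y_1 \beta^{-1}$ and $P_0 = 0$, so $\beta^1(X_1 Y_1 - P_0) = x_1 y_1 \beta^{-1}$, matching $W_1$. (One should double-check here that the asymmetry in \eqref{eq:N6}, namely $x_k Y_{k-1} + y_k X_k$ with $Y_{k-1}$ but $X_k$, is exactly what makes the telescoping work.)

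For the inductive step, I would assume $W_{k-1} = \beta^{k-1}(X_{k-1}Y_{k-1} - P_{k-2})$ and substitute into \eqref{eq:N6}. The key algebraic identity to exploit is that $X_k = X_{k-1} + x_k\beta^{-k}$ and $Y_k = Y_{k-1} + y_k\beta^{-k}$, so that
\begin{equation*}
X_k Y_k - X_{k-1}Y_{k-1} = x_k\beta^{-k}Y_{k-1} + y_k\beta^{-k}X_{k-1} + x_k y_k \beta^{-2k} = \beta^{-k}\bigl(x_k Y_{k-1} + y_k X_k\bigr),
\end{equation*}
where the last equality regroups the cross terms using $X_k = X_{k-1} + x_k\beta^{-k}$. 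Multiplying through by $\beta^k$ shows that the correction term $x_k Y_{k-1} + y_k X_k$ in \eqref{eq:N6} is precisely $\beta^k(X_k Y_k - X_{k-1}Y_{k-1})$. Combining this with the inductive hypothesis and the relation $P_{k-1} = P_{k-2} + p_{k-1}\beta^{-(k-1)}$, a short computation collapses the recurrence to $W_k = \beta^k(X_k Y_k - P_{k-1})$, completing the induction. I expect this regrouping of the cross terms to be the only delicate point, since it is where the particular (asymmetric) form of the update in \eqref{eq:N6} is used.

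Finally, for the limit statement, I would rewrite \eqref{eq:W_k_mult} as $X_k Y_k - P_{k-1} = \beta^{-k} W_k$. Since $|\beta| > 1$ and $(W_k)$ is assumed bounded, the right-hand side tends to $0$ as $k \to \infty$. Because $X_k \to X$ and $Y_k \to Y$ by definition of the $(\beta,\A)$-representations (the partial sums converge as $|\beta|>1$), we get $\lim_k X_k Y_k = XY$, and hence $\lim_k P_{k-1} = \lim_k P_k = XY$, which is the claimed identity $X\cdot Y = \lim_k X_k Y_k = \lim_k P_k = P$. The main obstacle is purely the bookkeeping in the inductive step; the limit argument is then immediate from $|\beta|>1$ and boundedness.
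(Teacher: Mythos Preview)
Your proposal is correct and follows essentially the same approach as the paper: induction on $k$ for \eqref{eq:W_k_mult} using the identities $X_k = X_{k-1} + x_k\beta^{-k}$, $Y_k = Y_{k-1} + y_k\beta^{-k}$, $P_{k-1} = P_{k-2} + p_{k-1}\beta^{-(k-1)}$, and then the limit from $|\beta|>1$ and boundedness of $(W_k)$. The only minor difference is that the paper's base case invokes the input assumption $x_1 = y_1 = 0$ to get $W_1 = 0$ directly, whereas your computation of $W_1 = x_1 y_1 \beta^{-1}$ works without that assumption; and for the limit the paper uses $X_kY_k - P_k = \beta^{-k}(W_k - p_k)$ rather than $X_kY_k - P_{k-1} = \beta^{-k}W_k$, but these are equivalent.
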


\begin{proof}
Due to our setting $x_1 = X_1= y_1=Y_1 = p_0 = P_0 =W_0 =0$, we have by (\ref{eq:N6}) that $W_1 = 0$, and thus $W_1= \beta(X_1Y_1 - P_{0})$. Using again   (\ref{eq:N6}) and the induction hypothesis, we obtain $W_{k+1} = \beta(W_{k} - p_{k}) + x_{k+1}Y_{k} + y_{k+1}X_{k+1} = \beta (\beta^k ( X_k Y_k - P_{k-1}) - p_{k} ) + ( x_{k+1} Y_{k} + y_{k+1} X_{k+1} )$, and the result follows from the fact that $X_{k+1} = X_{k} + x_{k+1} \beta^{-k-1}$, and similar relations for $Y_{k+1}$ and $P_k$.

Thus $X_k Y_k = \beta^{-k} W_k + P_{k-1} = \beta^{-k} (W_k - p_k) + P_k$, and $X_k Y_k - P_k = \beta^{-k} (W_k - p_k)$. Since $p_k$ is from the (finite set) ${\mathcal{A}}$ and $(W_k)$ is bounded, $\lim_{k \rightarrow \infty} X_k Y_k - P_k = 0$.
\end{proof}

The algorithm of Trivedi and Ercegovac gives the following parameters in integer base with $\SelectM(W_k)=\mathrm{round}(W_k)$.

\begin{corollary}\cite{TrivediErcegovac,FrougnySurarerks}
If $\beta$ is an integer $>1$ and ${\mathcal{A}}=\{-a, \ldots, a\}$ with $\beta/2 \le a \le \beta-1$, the on-line multiplication algorithm works with delay $\delta$, where $\delta$ is the smallest positive integer such that
$$
\frac{\beta}{2} + \frac{2a^2}{\beta^{\delta}(\beta-1)} \le a + \frac{1}{2}\,.
$$
\end{corollary}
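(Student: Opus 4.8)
The plan is to prove that the displayed inequality on $\delta$ is exactly the condition guaranteeing that every $W_k$ produced by \eqref{eq:N6} stays inside the interval where $\mathrm{round}$ returns a legal digit, and that this forces $(W_k)$ to be bounded. Concretely, I would show by induction on $k$ that $|W_k| \le a + \tfrac{1}{2}$ for all $k \ge 1$. This single bound does double duty: it makes $p_k = \mathrm{round}(W_k)$ land in $\A = \{-a,\dots,a\}$ (since $|W_k| \le a+\tfrac12$ forces the nearest integer into $\{-a,\dots,a\}$, with ties broken toward the larger magnitude), and it certifies that $(W_k)$ is bounded, so that Lemma~\ref{bounMult} delivers $P = X\cdot Y$.

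The first ingredient is an elementary tail estimate on the partial sums. Because the inputs satisfy $x_1 = \cdots = x_\delta = 0$ and $|x_j| \le a$, for every $k$ we have $X_k = \sum_{j=\delta+1}^{k} x_j \beta^{-j}$, whence
\[
|X_k| \le \sum_{j=\delta+1}^{\infty} a\,\beta^{-j} = \frac{a}{\beta^{\delta}(\beta-1)}\,,
\]
and likewise $|Y_k| \le \frac{a}{\beta^{\delta}(\beta-1)}$. This is where the delay enters: starting the significant digits only at position $\delta+1$ is what produces the factor $\beta^{-\delta}$ that eventually dominates the error term.

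For the inductive step I would use the recurrence \eqref{eq:N6} together with the rounding property $|W_{k-1} - p_{k-1}| \le \tfrac{1}{2}$. The triangle inequality gives
\[
|W_k| \le \beta\,|W_{k-1} - p_{k-1}| + |x_k|\,|Y_{k-1}| + |y_k|\,|X_k| \le \frac{\beta}{2} + \frac{2a^2}{\beta^{\delta}(\beta-1)}\,,
\]
using $|x_k|,|y_k| \le a$ and the tail estimate above. The hypothesis on $\delta$ asserts exactly that this upper bound is at most $a + \tfrac{1}{2}$, which closes the induction; the base case is immediate from $W_0 = W_1 = 0$. One should also check that such a $\delta$ exists: since $a \ge \beta/2$ we have $\beta/2 \le a + \tfrac12$ strictly, and the term $\frac{2a^2}{\beta^{\delta}(\beta-1)}$ tends to $0$, so the inequality holds for all large $\delta$, and we may take the smallest one. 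The main obstacle is organizing the induction so that the bound $|W_k|\le a+\tfrac12$ is simultaneously the hypothesis that yields $|W_{k-1}-p_{k-1}|\le\tfrac12$ and the conclusion that keeps the recurrence self-consistent; once that self-reference is set up cleanly, the arithmetic is routine and the conclusion $P = X\cdot Y$ follows from the boundedness clause of Lemma~\ref{bounMult}.
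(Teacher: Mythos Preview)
The paper does not supply its own proof of this corollary; it is stated with citations to \cite{TrivediErcegovac,FrougnySurarerks} and the remark that $\SelectM(W_k)=\mathrm{round}(W_k)$. Your argument is precisely the classical direct induction behind those references, and it is correct in substance: the invariant $|W_k|\le a+\tfrac12$ propagates via $|W_{k-1}-p_{k-1}|\le\tfrac12$ and the tail estimate $|X_k|,|Y_{k-1}|\le a/(\beta^\delta(\beta-1))$, and Lemma~\ref{bounMult} then gives $P=XY$. The existence of $\delta$ is also argued correctly.

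One small slip: you say ties are ``broken toward the larger magnitude,'' but that convention would send $a+\tfrac12$ to $a+1\notin\A$. You need ties broken \emph{toward zero} (or any rule with $\mathrm{round}(\pm(a+\tfrac12))=\pm a$) so that $|W_k|\le a+\tfrac12$ genuinely forces $p_k\in\A$ and still yields $|W_k-p_k|\le\tfrac12$. With that correction the induction is clean.

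As a side remark, specializing the paper's general OL framework (Lemma~\ref{intervalI} together with \eqref{deltatMult}) to this symmetric integer case would give $\frac{2a^2}{\beta^\delta(\beta-1)}<\frac{2a+1-\beta}{4(\beta+1)}$, a threshold worse by a factor $2(\beta+1)$ than the corollary's $\frac{2a+1-\beta}{2}$. The sharp constant really does require the direct rounding argument you give rather than the generic $\varepsilon$--fattening.
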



\subsection{On-line division algorithm}

The algorithm for on-line division in $(\beta,{\mathcal{A}})$ numeration system has two parameters: the delay $\delta\in \N$ and  $D_{\min} >0$,  the minimal value (in modulus) of the denominator. The $\Select$ function is here called $\SelectD$, and it has two variables.

The input consists of $(\beta,{\mathcal{A}})$-representations of the numerator  $N = \sum_{j=1}^\infty n_j \beta^{-j}$, the denominator $D = \sum_{j=1}^\infty d_j \beta^{-j}$, and their quotient $Q = \sum_{j=1}^\infty q_j \beta^{-j}$. Partial sums are denoted by $N_k =\sum_{j=1}^k n_j \beta^{-j}$, $D_k = \sum_{j=1}^k d_j \beta^{-j}$, and $Q_k = \sum_{j=1}^k q_j \beta^{-j}$.

The inputs of the algorithm are two (possibly infinite) strings
$$
0\decdot n_1 n_2 \cdots n_\delta n_{\delta +1} n_{\delta +2} \cdots \quad \text{ with } n_j \in {\mathcal{A}} \text{ and } n_1 = n_2= \cdots = n_\delta = 0 \, , \text{ and}
$$
\begin{equation}\label{Denominator}
    0\decdot d_1 d_2 d_3 \cdots \quad \text{ with } d_j \in {\mathcal{A}} \text{ satisfying } |D_k| \geq D_{\min} \text{ for all } k \in \N, \, k\geq 1 \, .
\end{equation}

The output is a (possibly infinite) string $ 0\decdot q_1 q_2 q_3 \cdots$ corresponding to a $(\beta,{\mathcal{A}})$-represen\-tation of the quotient $Q = N/D = \sum_{j=1}^{\infty} q_j \beta^{-j}$. Again, the settings of the algorithm ensure that the representation of $Q$ starts behind the fractional point.

We carry out the on-line division in iterative steps. To start with, set $W_0 = q_0 = Q_0 = 0 \, .$

Each $k$-th step of the iteration proceeds (starting from $k=1$) by calculation of
\begin{equation}\label{W_k_definition}
    W_{k} = \beta (W_{k-1} - q_{k-1} D_{k-1+\delta}) + (n_{k+\delta} - Q_{k-1} d_{k+\delta}) \beta^{-\delta}.
\end{equation}

The $k$-th digit $q_k$ of the representation of the quotient $Q = \frac{N}{D}$ is evaluated by $\SelectD$, function of the values of the auxiliary variable
$W_k$ and the interim representation $D_{k+\delta}$, so that
\begin{equation} \label{eq:W_k_div}
    q_k=\SelectD (W_k, D_{k+\delta}) \in {\mathcal{A}} \, .
\end{equation}

\begin{lemma}\label{fporadku}
Definition~\eqref{W_k_definition} of $W_k$  implies that, for any $k \geq 1$:
\begin{equation}\label{W_k_formula}
    W_k = \beta^k (N_{k+\delta} - Q_{k-1} D_{k+\delta}) \, .
\end{equation}
Moreover, if the sequence $(W_k)$ is bounded, then $ Q=\lim_{k\to \infty}{Q_k} = \frac{N}{D} \, . $
\end{lemma}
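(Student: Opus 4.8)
The plan is to mirror the proof of Lemma~\ref{bounMult}: first establish the closed form \eqref{W_k_formula} by induction on $k$, and then read off the convergence of $(Q_k)$ from the boundedness of $(W_k)$.

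For the base case $k=1$, I would substitute the initial values $W_0 = q_0 = Q_0 = 0$ into \eqref{W_k_definition}, which collapses it to $W_1 = n_{1+\delta}\beta^{-\delta}$. On the other hand, since $n_1 = \cdots = n_\delta = 0$, the partial sum $N_{1+\delta}$ reduces to the single term $n_{1+\delta}\beta^{-(1+\delta)}$, so that $\beta^1(N_{1+\delta} - Q_0 D_{1+\delta}) = \beta N_{1+\delta} = n_{1+\delta}\beta^{-\delta} = W_1$, in agreement with \eqref{W_k_formula}. For the induction step I would assume \eqref{W_k_formula} at level $k$ and plug it into \eqref{W_k_definition} for $W_{k+1}$, with the aim of massaging the result into $\beta^{k+1}(N_{k+1+\delta} - Q_k D_{k+1+\delta})$. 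To compare the two sides I would expand the target by means of the three incremental relations
\begin{align*}
N_{k+1+\delta} &= N_{k+\delta} + n_{k+1+\delta}\beta^{-(k+1+\delta)}, \\
D_{k+1+\delta} &= D_{k+\delta} + d_{k+1+\delta}\beta^{-(k+1+\delta)}, \\
Q_k &= Q_{k-1} + q_k\beta^{-k},
\end{align*}
and verify the match term by term. The two ingredients that make the sides coincide are: the factor $\beta^{k+1}$ multiplying $Q_k D_{k+\delta}$, combined with $Q_k = Q_{k-1} + q_k\beta^{-k}$, splits off exactly the term $\beta q_k D_{k+\delta}$ appearing in \eqref{W_k_definition}; while the two $\beta^{-(k+1+\delta)}$ corrections coming from $N_{k+1+\delta}$ and $D_{k+1+\delta}$ contribute precisely $(n_{k+1+\delta} - Q_k d_{k+1+\delta})\beta^{-\delta}$. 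I expect the only real obstacle here to be index bookkeeping rather than any conceptual difficulty: one must keep the $\delta$-offset between the $W$-index and the partial-sum indices straight, and respect that $Q_k$ increments with weight $\beta^{-k}$ whereas $D_{k+1+\delta}$ increments with weight $\beta^{-(k+1+\delta)}$.

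For the second statement I would rewrite \eqref{W_k_formula} as $N_{k+\delta} - Q_{k-1}D_{k+\delta} = \beta^{-k}W_k$. Since $(W_k)$ is assumed bounded and $|\beta|>1$, the right-hand side tends to $0$ as $k\to\infty$. The denominator condition \eqref{Denominator} guarantees $|D_{k+\delta}| \geq D_{\min} > 0$ for every $k$, hence $|D| \geq D_{\min} > 0$ in the limit, so $D\neq 0$ and we may divide to obtain
$$ Q_{k-1} = \frac{N_{k+\delta} - \beta^{-k}W_k}{D_{k+\delta}} \xrightarrow[k\to\infty]{} \frac{N}{D}, $$
using $N_{k+\delta}\to N$ and $D_{k+\delta}\to D$. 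This simultaneously shows that $(Q_k)$ converges and identifies its limit as $Q = N/D$, which completes the proof.
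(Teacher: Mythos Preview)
Your proposal is correct and follows essentially the same approach as the paper: induction for \eqref{W_k_formula} (which the paper only sketches by pointing to Lemma~\ref{bounMult}), then rewriting $\beta^{-k}W_k = N_{k+\delta} - Q_{k-1}D_{k+\delta}$ and passing to the limit using $|D|\geq D_{\min}>0$. Your version is in fact slightly more careful than the paper's, since by solving for $Q_{k-1}$ explicitly you show the convergence of $(Q_k)$ directly rather than presupposing it in the limit computation.
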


\begin{proof}
Formula (\ref{W_k_formula}) is proved by induction, analogously as in Lemma \ref{bounMult}.
The formula $W_k\beta^{-k} = N_{k+\delta} - Q_{k-1} D_{k+\delta}$ ensures, for bounded $(W_k)$, that
$$
0 = \lim\limits_{k\to \infty} (N_{k+\delta} - Q_{k-1} D_{k+\delta}) = N - D \lim_{k\to \infty}{Q_k} \, .
$$
As $\lim\limits_{k\to \infty}|D_{k+\delta}| = |D| > 0$, the statement follows.
\end{proof}

\medskip

Clearly, the choice of the selection function is the crucial point for correctness of the algorithms for both on-line multiplication and on-line division.


\section{On-line multiplication and division in real and complex bases}\label{S_OL_mult+div}

In this section, we give a sufficient condition on $\beta \in \C$ and ${\mathcal{A}} \subset \C$, which guarantees that the numeration system  $(\beta, {\mathcal{A}})$ allows to perform on-line multiplication and division by the Trivedi-Ercegovac algorithm.

Let us fix the following notation:  for $\varepsilon >0$ and a set $T\subset \C$, $T^\varepsilon$ stands for the $\varepsilon$-fattening of the set $T$:
$$
T^\varepsilon = \bigcup_{x \in T}B(x,\varepsilon)\, , \quad \text{ where } B(x, \varepsilon) \text{ denotes the ball with center } x \text{ and radius } \varepsilon \, .
$$

For numbers $a, \beta \in \C$ and a set $T \subset \C$, we denote
$$
T+a = \{ x +a : x \in T\}\qquad \text{and} \qquad  \beta T = \{ \beta x: x \in T\}\,.
$$
Moreover,  by  ${\rm cl}(T)$   and  $T^o$    we denote the closure and  the interior  of $T$, respectively.
The  metric we use on  $\mathbb{C}$  (or on $\mathbb{R}$,   if    $\beta\in \mathbb{R}$ and  ${\mathcal{A}} \subset  \mathbb{R}$)  is induced     by the absolute  value.

\begin{definition}\label{DefOL}
A pair $(\beta, {\mathcal{A}})$ has the \emph{OL Property} if there exists a bounded set $I$ such that  $\beta\, {\rm cl}(I) \subset \cup_{a\in {\mathcal{A}}}(I^o +a)$.
\end{definition}

The {OL Property}  says that $\beta\, {\rm cl}(I) $ is covered by
copies of $I^o$ shifted by  digits $a$ of $\mathcal{A}$.  Using
Theorem 2.7 in \cite{Kurka}  we can reformulate the OL property.

\begin{lemma} A pair  $(\beta, {\mathcal{A}})$ has OL Property if and only if   there exist a bounded set  $I$  and  a positive number $\varepsilon$ (called the \emph{Lebesgue number} of the covering) such that
\begin{equation}\label{OL}
    \text{for each $x \in (\beta I)^\varepsilon$ there exists $a \in {\mathcal{A}}$ such that $B(x,\varepsilon) \subset I + a$.}
\end{equation}
\end{lemma}

\medskip

The previous consequence  of the  OL Property is crucial for applicability of the Trivedi-Ercegovac algorithm in  non-stnadard numeration systems.

\begin{lemma}\label{digit}
Suppose that $(\beta, {\mathcal{A}})$ has the OL Property, and  let a  bounded set $I$ and $\varepsilon > 0$ satisfy \eqref{OL}. Then there exists a function $\Digit: (\beta I)^\varepsilon \rightarrow {\mathcal{A}}$ such that
\begin{equation}\label{Digit_formula}
 B(V, \varepsilon) \subset I +  \Digit(V)  \,.
\end{equation}
\end{lemma}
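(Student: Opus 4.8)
The plan is to realize $\Digit$ as a selection function extracted from the existential statement in the OL Property. First I would fix, for every $V \in (\beta I)^\varepsilon$, the set of \emph{admissible} digits
$$
\A_V = \{ a \in \A : B(V, \varepsilon) \subset I + a \} \, .
$$
The assumption \eqref{OL} says precisely that for each such $V$ there is at least one digit $a \in \A$ with $B(V,\varepsilon) \subset I + a$; in other words, $\A_V \neq \emptyset$ for every $V \in (\beta I)^\varepsilon$.

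Next I would turn this family of nonempty sets into a genuine function by a deterministic choice rule. Since $\A$ is a finite subset of $\C$, I can fix once and for all a total ordering $a^{(1)}, a^{(2)}, \dots, a^{(m)}$ of its elements (for instance the lexicographic order on the real and imaginary parts). I would then define $\Digit(V)$ to be the first element of this list that lies in $\A_V$, i.e. the admissible digit of smallest index. Because $\A_V$ is nonempty and $\A$ is finite, this first admissible digit always exists and is unique, so $\Digit$ is a well-defined map from $(\beta I)^\varepsilon$ to $\A$.

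Finally I would verify the required implication: if $\Digit(V) = a$, then by construction $a \in \A_V$, which is exactly the statement $B(V,\varepsilon) \subset I + a$, that is \eqref{Digit_formula}.

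There is no genuine obstacle here, as the lemma merely repackages Definition \ref{DefOL}; the only thing to supply is a way of selecting one admissible digit per point. The finiteness of $\A$ makes this selection entirely elementary, so that no appeal to the axiom of choice, nor to any regularity of $\Digit$ such as continuity or measurability, is needed: the statement asserts only the existence of a function satisfying the pointwise implication \eqref{Digit_formula}, and the ordered-selection construction delivers one directly.
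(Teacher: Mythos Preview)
Your proposal is correct and matches the paper's approach: the paper states only that the lemma is a direct consequence of Definition~\ref{DefOL}, and your argument spells out exactly that consequence by choosing, for each $V$, one of the digits whose existence is guaranteed by~\eqref{OL}. Your use of a fixed ordering on the finite set $\A$ to make the selection deterministic is a nice touch but not strictly required, since the lemma only asks for existence of some function.
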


When selecting the $k^{th}$-digit $p_k$ (in the multiplication algorithm) or $q_k$ (in the division algorithm), we do not want to evaluate the auxiliary variable $W_k$ precisely, as it would be too costly. We shall use only a reasonable approximation by several most important digits of $W_k$, and also of $D_{k+\delta}$ (for division).

\begin{definition}\label{truncation}
For $E>0$, denote by $\Trunc_E$ a function $\Trunc_E : \C \rightarrow \C$ such that
\begin{equation}
    |X - \Trunc_E(X)| < E \quad \text{ for any } X \in \C.
\end{equation}
\end{definition}

In the sequel, we use the $\Trunc_E(X)$ function in the form of truncation of the less significant digits in the $(\beta, A)$-representation of the number $X = \sum_{j=1}^{\infty} x_j \beta^{-j}$; namely $\Trunc_E(X) = \sum_{j=1}^{L} x_j \beta^{-j}$ with $L \in \N$ such that $| \sum_{j=L+1}^{\infty} x_j \beta^{-j} | < E$.


\subsection{Selection function for on-line multiplication}

Herein, we exploit the OL Property to construct the $\Select$ function for on-line multiplication. According to Lemma \ref{bounMult},  the main and only goal of this construction is to guarantee that the auxiliary sequence $(W_k)$ which is produced by the algorithm remains bounded.

\begin{definition}\label{SelectMult}
Let $(\beta, {\mathcal{A}})$ be a numeration system with the OL Property, let $I \subset \C$ and $\varepsilon > 0$ satisfy \eqref{OL}, and let $\Digit$ be the function from Lemma \ref{digit} satisfying (\ref{Digit_formula}). The \emph{selection function for multiplication} $\SelectM: (\beta I)^{\varepsilon/2} \rightarrow {\mathcal{A}}$ is defined by
\begin{equation}\label{Select_formula}
    \SelectM(U) = \Digit\bigl(\Trunc_{\varepsilon/2}(U)\bigr)  \quad \text{for any} \ U \in ( \beta I)^{\varepsilon/2} \,.
\end{equation}
\end{definition}

The previous definition is correct only if $\Trunc_{\varepsilon/2}(U)$ belongs to the domain of the function $\Digit$. Indeed, since $U \in (\beta I)^{\varepsilon/2}$ and $|U - \Trunc_{\varepsilon/2}(U)| < \varepsilon/2$, the value $\Trunc_{\varepsilon/2}(U)$ is in $(\beta I)^\varepsilon$, as needed.

\begin{lemma}\label{inI}
Let $U \in (\beta I)^{\varepsilon/2}$. Then $U - \SelectM(U) \in I$.
\end{lemma}

\begin{proof}
Let us denote $V = \Trunc_{\varepsilon/2}(U) \in (\beta I)^\varepsilon $ and $a = \Digit(V) \in {\mathcal{A}}$. By the property of the function $\Digit$, we have $B(V,\varepsilon) \subset I+a$. Since $|V-U| <\varepsilon/2$, the value $U \in B(V,\varepsilon) \subset I+a$. Or, equivalently, $U-a \in I$.
\end{proof}

\begin{lemma}\label{AgainIn}
Let $(\beta, {\mathcal{A}})$ be a numeration system with the OL Property, let $I \subset \C$, $\varepsilon > 0$ satisfy \eqref{OL}, and let $\SelectM$ be the function (\ref{Select_formula}) from Definition \ref{SelectMult}. Then there exists $\delta \in \N$ such that, for any $U \in ( \beta I)^{\varepsilon/2}$, any $x,y \in {\mathcal{A}}$, any $X = \sum_{j=\delta+1}^{\infty} x_j\beta^{-j}$ and $Y = \sum_{j=\delta+1}^{\infty} y_j \beta^{-j}$ with $x_j, y_j \in {\mathcal{A}}$, the number
$$
U_{\new} = \beta \bigl(U - \SelectM(U) \bigr) + (y X + x Y ) \ \ \text{belongs to } \ ( \beta I)^{\varepsilon/2}.
$$
\end{lemma}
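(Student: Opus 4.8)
The plan is to show that $U_{\new}$ lands back in $(\beta I)^{\varepsilon/2}$ by splitting its construction into the already-established fact $U - \SelectM(U) \in I$ (Lemma \ref{inI}) and a small perturbation coming from the term $yX + xY$. First I would invoke Lemma \ref{inI} to write $U - \SelectM(U) = \iota$ for some $\iota \in I$. Multiplying by $\beta$ gives $\beta \iota \in \beta I$, so $\beta\bigl(U - \SelectM(U)\bigr)$ is already a point of $\beta I$. It then remains to control the correction term $yX + xY$ and argue that adding it keeps us within the $\varepsilon/2$-fattening of $\beta I$.

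The heart of the argument is therefore a uniform bound on $|yX + xY|$. Since $x,y \in \A$ and $X = \sum_{j=\delta+1}^\infty x_j \beta^{-j}$, $Y = \sum_{j=\delta+1}^\infty y_j \beta^{-j}$ with all digits in the finite set $\A$, I would estimate
\begin{equation}
|yX + xY| \le 2 \max_{a \in \A} |a| \cdot \max_{a \in \A} |a| \sum_{j=\delta+1}^{\infty} |\beta|^{-j} = \frac{2 \bigl(\max_{a\in\A}|a|\bigr)^2}{|\beta|^{\delta}(|\beta| - 1)}\,.
\end{equation}
Because $|\beta| > 1$, this geometric-tail bound tends to $0$ as $\delta \to \infty$. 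Hence I can choose $\delta \in \N$ large enough that $|yX + xY| < \varepsilon/2$, and this single choice works uniformly over all admissible $x,y,X,Y$ since the bound depends only on $\A$, $\beta$, and $\delta$.

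With $\delta$ so fixed, I finish by a direct ball-containment check. Writing $z = \beta \iota \in \beta I$, we have $U_{\new} = z + (yX + xY)$ with $|yX + xY| < \varepsilon/2$, so $U_{\new} \in B(z, \varepsilon/2) \subset (\beta I)^{\varepsilon/2}$ by the definition of the $\varepsilon/2$-fattening. This places $U_{\new}$ back in the domain $(\beta I)^{\varepsilon/2}$ of $\SelectM$, as required, and closes the induction that lets the algorithm iterate.

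The main obstacle I anticipate is not the final containment (which is routine once the pieces are in place) but pinning down precisely that the perturbation bound is independent of the particular operands: one must be careful that $X$ and $Y$ start at index $\delta+1$ — this is exactly what the hypotheses $x_1 = \cdots = x_\delta = 0$ in the multiplication algorithm guarantee — so that the geometric sum genuinely begins at $j = \delta + 1$ and the tail can be made arbitrarily small by enlarging $\delta$. Keeping the quantifier order straight (choose $\delta$ first, then the bound holds for \emph{all} subsequent $U, x, y, X, Y$) is the delicate bookkeeping step.
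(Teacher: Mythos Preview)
Your proof is correct and follows essentially the same approach as the paper: invoke Lemma~\ref{inI} to place $\beta\bigl(U-\SelectM(U)\bigr)$ in $\beta I$, bound $|yX+xY|$ by the geometric tail $\frac{2A^2}{|\beta|^{\delta}(|\beta|-1)}$ with $A=\max_{a\in\A}|a|$, and choose $\delta$ so that this is below $\varepsilon/2$. The paper's argument is just a terser version of yours.
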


\begin{proof}
Let us denote $A= \max\{|a| : a \in {\mathcal{A}}\}$, and find $\delta \in \N$ such that
\begin{equation}\label{deltatMult}
    \frac{1}{|\beta|^\delta} \frac{2A^2}{|\beta|-1}< \varepsilon/2\,.
\end{equation}
Then $|yX+xY| < \frac{\varepsilon}{2}$, and, according to Lemma \ref{inI}, the value $\beta \bigl(U - \SelectM(U)\bigr) \in \beta I$. This concludes the proof.
\end{proof}

\begin{theorem}\label{thm:multi}
Suppose that a numeration system $(\beta, {\mathcal{A}})$ has the OL Property, and let a bounded set $I \subset \C$ and $\varepsilon > 0$ satisfy~\eqref{OL}. If $0 \in I$, then on-line multiplication in $(\beta, {\mathcal{A}})$ is performable by the Trivedi-Ercegovac algorithm.
\end{theorem}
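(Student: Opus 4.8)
The plan is to take the delay $\delta$ to be the integer furnished by Lemma \ref{AgainIn}, and then to prove by induction the single invariant that $W_k \in (\beta I)^{\varepsilon/2}$ for every $k \geq 1$. This invariant does double duty. First, it guarantees that each call $p_k = \SelectM(W_k)$ in \eqref{eq:N6} is legitimate, since the domain of $\SelectM$ is exactly $(\beta I)^{\varepsilon/2}$; so the algorithm is well defined at every step. Second, because $I$ is bounded and hence so is $(\beta I)^{\varepsilon/2}$, the invariant shows that the sequence $(W_k)$ is bounded. Once boundedness is established, Lemma \ref{bounMult} immediately yields $X\cdot Y = \lim_k P_k = P$, so the emitted string $0\decdot p_1p_2\cdots$ is a genuine $(\beta,\A)$-representation of the product, which is what the theorem asserts.

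For the base case I would exploit the leading zeros of the inputs. Since $x_1=\cdots=x_\delta=0=y_1=\cdots=y_\delta$ and $W_0=p_0=0$, the recurrence \eqref{eq:N6} gives $W_1=0$, and $W_1\in(\beta I)^{\varepsilon/2}$ because the hypothesis $0\in I$ forces $0=\beta\cdot 0\in\beta I\subset(\beta I)^{\varepsilon/2}$. This is precisely where $0\in I$ is used, and it is also what makes $\SelectM$ applicable during the initial steps.

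The inductive step is where the indexing must be handled carefully, and this is the main (though not deep) obstacle: one must recognize the algorithm's recurrence as an instance of the map $U\mapsto U_{\new}$ controlled by Lemma \ref{AgainIn}. Assume $W_k\in(\beta I)^{\varepsilon/2}$ for some $k\geq 1$ and set $U=W_k$; the identity $p_k=\SelectM(W_k)$ matches the term $\beta(U-\SelectM(U))$. For the cross term I would take $x=x_{k+1}$, $y=y_{k+1}$, together with $Y=Y_k$ and $X=X_{k+1}$, so that $x_{k+1}Y_k+y_{k+1}X_{k+1}=yX+xY$ and thus $W_{k+1}=\beta(W_k-p_k)+(x_{k+1}Y_k+y_{k+1}X_{k+1})=U_{\new}$. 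The only point to verify is that $X_{k+1}$ and $Y_k$ really have the form $\sum_{j=\delta+1}^{\infty}(\cdot)_j\beta^{-j}$ demanded by the lemma, and this is exactly where the leading zeros pay off: because $x_1=\cdots=x_\delta=0$ we have $X_{k+1}=\sum_{j=\delta+1}^{k+1}x_j\beta^{-j}$, a finite tail (padded with zeros, which lie in $\A$) starting at index $\delta+1$, and likewise $Y_k=\sum_{j=\delta+1}^{k}y_j\beta^{-j}$. Lemma \ref{AgainIn} then delivers $W_{k+1}=U_{\new}\in(\beta I)^{\varepsilon/2}$, closing the induction.

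In short, the theorem is an assembly of the preceding lemmas. Essentially all the analytic content — the choice of $\delta$ in \eqref{deltatMult}, the stability of $(\beta I)^{\varepsilon/2}$ under the update, and the fact that digit selection keeps $U-\SelectM(U)$ inside $I$ — has already been isolated in Lemmas \ref{inI} and \ref{AgainIn}. The role of the present proof is only to confirm that the recurrence \eqref{eq:N6} is literally the iteration those lemmas govern, with the delay $\delta$ absorbing the leading zeros, after which boundedness of $(W_k)$ and Lemma \ref{bounMult} finish the argument.
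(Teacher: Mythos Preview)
Your proposal is correct and follows essentially the same approach as the paper's own proof: use $0\in I$ to place the initial value in $(\beta I)^{\varepsilon/2}$, invoke Lemma~\ref{AgainIn} inductively to keep $W_k$ there, and conclude via Lemma~\ref{bounMult}. Your version is in fact more carefully argued---you start the induction at $k=1$ (sidestepping the minor issue that $p_0=0$ is set directly rather than via $\SelectM$), and you explicitly verify that $X_{k+1}$ and $Y_k$ have the form $\sum_{j\ge\delta+1}(\cdot)\beta^{-j}$ demanded by Lemma~\ref{AgainIn}, details the paper leaves implicit.
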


\begin{proof}
Since $W_0 = 0 \in I$, necessarily $0\in ( \beta I)^{\varepsilon/2}$. Lemma \ref{AgainIn} implies that $W_k \in( \beta I)^{\varepsilon/2} $ for any $k \in
N$ as well, and thus the sequence $(W_k)$ is bounded. According to Lemma \ref{bounMult}, the boundedness of $(W_k)$ implies that the output sequence $0\decdot p_1p_2p_3\cdots$ converges to the product $P = XY$.
\end{proof}


\subsection{Selection function for on-line division}

Also for on-line division, we need to define the $\Select$ function.  Due to   Lemma \ref{fporadku}, our aim is again to preserve the boundedness of $(W_k)$.

Suppose that the value $D_{\min} >0$ is given, and only divisors satisfying \eqref{Denominator} are on the input of our algorithm. In this whole subsection, we assume that the numeration system $(\beta, {\mathcal{A}})$ has the OL Property, that $I \subset \C$, $\varepsilon > 0$ satisfy \eqref{OL}, and the divisor $D$ satisfies~\eqref{Denominator}.

The $\SelectD$ function in the Trivedi-Ercegovac algorithm for division has two variables, namely $W_k$ and $D_{k+\delta}$. Again, we do not want to compute these values precisely. In order to determine a suitable level of approximation, find $\alpha >0$ such that
\begin{equation}\label{ALFA}
    \alpha \bigl( 1+|\beta|K+ \varepsilon\bigr) < \tfrac\varepsilon2 D_{\min}, \quad \text{where} \ K  = \max\{|x| : x \in I\}\,.
\end{equation}

For specification of the function $\SelectD$ for division, we use the function $\Trunc_\alpha$. Now we moreover require that $\Trunc_\alpha$ fulfils the implication
\begin{equation}\label{trunc_alpha_D}
    |D| > D_{\min} \ \ \Rightarrow \ \ |\Trunc_\alpha(D)| \geq D_{\min}
\end{equation}
for any admissible divisor $D$. This assumption is in fact not restrictive, as we use the $\Trunc$ function in the form of truncation of the less significant digits in the $(\beta, {\mathcal{A}})$-representation of $D$. Since any input $D$ of our algorithm need to satisfy \eqref{Denominator}, the implication \eqref{trunc_alpha_D} is automatically true.

\begin{definition}\label{selectDiv}
Let $U \in \C$ and a divisor $D \in \C$ satisfy $U\in D (\beta I)^{\varepsilon/2}$, and let $\alpha > 0$ fulfil (\ref{ALFA}). The \emph{selection function for division} is defined by
\begin{equation}
    \SelectD(U,D) = \Digit\Bigl(\tfrac{V}{\Delta}\Bigr), \quad \text{where\ }\ V =\Trunc_\alpha(U)\ \text{and} \ \ \Delta = \Trunc_\alpha(D)\,.
\end{equation}
\end{definition}

Let us stress that the domain of the function $\Digit$ is $ (\beta I)^{\varepsilon}$. Thus the previous definition is correct only if $V/\Delta$ belongs to this domain. The next lemma shows that our choice of the parameter $\alpha$ in \eqref{ALFA} guarantees this property.

\begin{lemma}\label{InDomain}
For $U \in \C$ and a divisor $D \in \C$ satisfying \eqref{Denominator}, and for $\alpha > 0$ fulfilling (\ref{ALFA}), put $V = \Trunc_\alpha(U)$ and $\Delta = \Trunc_\alpha(D)$. Then
\begin{equation}
    U \in D (\beta I)^{\varepsilon/2} \quad \Longrightarrow \quad V\in \Delta (\beta I)^{\varepsilon}\,.
\end{equation}
\end{lemma}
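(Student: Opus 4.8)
The plan is to reduce everything to a single scalar estimate on $\bigl|\tfrac{V}{\Delta} - \tfrac{U}{D}\bigr|$. First I would unwind the two membership statements. The hypothesis $U \in D(\beta I)^{\varepsilon/2}$ says precisely that $U/D \in (\beta I)^{\varepsilon/2}$, so there is a point $z \in \beta I$ with $|U/D - z| < \varepsilon/2$; the desired conclusion $V \in \Delta(\beta I)^\varepsilon$ says $V/\Delta \in (\beta I)^\varepsilon$, which (using the same $z$) will follow from the triangle inequality once I show $\bigl|\tfrac{V}{\Delta} - \tfrac{U}{D}\bigr| < \varepsilon/2$. This is the usual $\varepsilon = \varepsilon/2 + \varepsilon/2$ split: one half comes from the hypothesis, the other half I must manufacture from the truncation errors.

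Next I would collect the elementary bounds. From Definition \ref{truncation} the truncations satisfy $|U - V| < \alpha$ and $|D - \Delta| < \alpha$. Since $\Delta = \Trunc_\alpha(D)$ is a partial sum $D_L$ of the $(\beta, \A)$-representation of $D$, condition \eqref{Denominator} (equivalently the implication \eqref{trunc_alpha_D}) gives the crucial lower bound $|\Delta| \geq D_{\min}$. Finally, because $(\beta I)^{\varepsilon/2}$ is bounded (as $I$ is), writing $U/D = \beta u + e$ with $u \in I$ and $|e| < \varepsilon/2$ yields $|U/D| \leq |\beta| K + \tfrac{\varepsilon}{2}$, where $K = \max\{|x| : x \in I\}$; this is exactly the quantity that \eqref{ALFA} is built to absorb.

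With these in hand I would carry out the main estimate using the decomposition
$$
\frac{V}{\Delta} - \frac{U}{D} = \frac{V - U}{\Delta} + U\,\frac{D - \Delta}{\Delta D}\,,
$$
which keeps every truncation error divided by the controlled $|\Delta|$ rather than by the uncontrolled $|D|$. It follows that
$$
\left|\frac{V}{\Delta} - \frac{U}{D}\right| \leq \frac{|V-U|}{|\Delta|} + \frac{|U|}{|D|}\cdot\frac{|D-\Delta|}{|\Delta|} < \frac{\alpha}{|\Delta|}\Bigl(1 + |\beta|K + \tfrac{\varepsilon}{2}\Bigr) \leq \frac{\alpha}{D_{\min}}\bigl(1 + |\beta|K + \varepsilon\bigr)\,.
$$
By the defining inequality \eqref{ALFA}, the right-hand side is strictly below $\varepsilon/2$, which is precisely what the first paragraph requires, and the triangle inequality then gives $|V/\Delta - z| < \varepsilon$, i.e. $V \in \Delta(\beta I)^\varepsilon$.

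The step requiring care is the passage to the ratio: dividing amplifies the truncation error $\alpha$ by $1/|\Delta|$, so the argument genuinely depends on having both $|\Delta| \geq D_{\min}$ and the a priori bound $|U/D| \leq |\beta|K + \tfrac{\varepsilon}{2}$ available at once. I expect no deeper obstacle beyond choosing the decomposition above so that the factor $1 + |\beta|K + \varepsilon$ in \eqref{ALFA} matches the estimate term by term.
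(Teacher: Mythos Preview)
Your argument is correct and is essentially the same as the paper's: the paper writes $U = D(\beta y + \varepsilon_1)$ with $y \in I$, $|\varepsilon_1| < \varepsilon/2$, sets $V = \Delta(\beta y + \varepsilon_1 + \varepsilon_2)$, and bounds $|\varepsilon_2| \leq \alpha(1 + |\beta|K + \varepsilon/2)/D_{\min} < \varepsilon/2$ --- if you unwind that, $\varepsilon_2$ is exactly your $V/\Delta - U/D$, obtained via the same decomposition $\tfrac{V-U}{\Delta} + U\tfrac{D-\Delta}{\Delta D}$. The only cosmetic difference is that you phrase everything in terms of quotients $U/D$, $V/\Delta$ whereas the paper stays with products.
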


\begin{proof}
For $U \in D (\beta I)^{\varepsilon/2}$, there exist $y \in I$ and $\varepsilon_1 \in \C$ such that $U = D (\beta y +\varepsilon_1)$ and $|\varepsilon_1|< \varepsilon/2$. Let us denote $\alpha_1, \alpha_2 \in \C$ such that $U = V + \alpha_1$ and $D = \Delta + \alpha_2$. Obviously, $|\alpha_1|$, $|\alpha_2| <\alpha$. We get $V + \alpha_1 = (\Delta + \alpha_2)(\beta y + \varepsilon_1)$. Thus $V = \Delta (\beta y + \varepsilon_1) - \alpha_1 + \alpha_2 (\beta y + \varepsilon_1)$. If we denote
$$
\varepsilon_2 = \frac{-\alpha_1 + \alpha_2 (\beta y + \varepsilon_1)}{\Delta} \, ,
$$
we can express $V = \Delta (\beta y + \varepsilon_1 + \varepsilon_2)$. Using \eqref{ALFA}, we obtain
$$
|\varepsilon_2| \leq \frac{\alpha +\alpha (|\beta|K + \varepsilon/2 )}{D_{\min}} < \frac{\varepsilon}{2} \,.
$$
It means that $V = \Delta (\beta y + \varepsilon_1 + \varepsilon_2)$ belongs to $\Delta (\beta I)^{\varepsilon}$.
\end{proof}

The following statement corresponds to the iterative step in the division algorithm.

\begin{lemma}\label{inductionDivision}
There exists $\delta \in \N$ such that, for any $U, D, F, G \in \C$ with the properties $U \in D (\beta I)^{\varepsilon/2}$, $|F| \leq  A=\max\{|a| : a \in {\mathcal{A}}\}$ and $|G| \leq A \Bigl(1 + \tfrac{A}{|\beta|-1} \Bigr)$, the numbers $U_{\new} = \beta (U - qD) + \tfrac{G}{\beta^\delta}$ and $D_{\new} = D + \tfrac{F}{\beta^{\delta+1}}$, where $q = \SelectD(U,D)$, satisfy $ U_{\new} \in D_{\new}(\beta I)^{\varepsilon/2} \, .$
\end{lemma}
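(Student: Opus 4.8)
The plan is to prove this as the division counterpart of Lemma~\ref{AgainIn}; the difference is that here I work multiplicatively, aiming to show that the renormalized quantity $U_{\new}/(\beta D_{\new})$ lands exactly in $I$, whence $U_{\new}/D_{\new} \in \beta I \subset (\beta I)^{\varepsilon/2}$.

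First I would unfold the definitions. Write $V = \Trunc_\alpha(U)$, $\Delta = \Trunc_\alpha(D)$ and $q = \SelectD(U,D) = \Digit(V/\Delta)$. Lemma~\ref{InDomain} guarantees that $V/\Delta$ lies in $(\beta I)^\varepsilon$, the domain of $\Digit$, so $q$ is well defined, and property~\eqref{Digit_formula} of $\Digit$ yields the inclusion I will exploit at the end, $B(V/\Delta,\varepsilon) \subset I + q$. I would also recall from the proof of Lemma~\ref{InDomain} the explicit decomposition $U/D = \beta y + \varepsilon_1$ and $V/\Delta = \beta y + \varepsilon_1 + \varepsilon_2$ with the same $y \in I$ and $|\varepsilon_1|, |\varepsilon_2| < \varepsilon/2$; in particular $U/D - V/\Delta = -\varepsilon_2$.

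The heart of the argument is an algebraic identity. Substituting $U_{\new} = \beta(U - qD) + G/\beta^\delta$ and $D_{\new} = D + F/\beta^{\delta+1}$, the $q$-terms combine as $q(D_{\new}-D) = qF/\beta^{\delta+1}$, so that
$$
\frac{U_{\new}}{\beta D_{\new}} + q = \frac{U}{D_{\new}} + \frac{G + qF}{\beta^{\delta+1}D_{\new}}.
$$
Subtracting $V/\Delta$ and using $U/D_{\new} - U/D = -UF/(\beta^{\delta+1}DD_{\new})$ together with $U/D - V/\Delta = -\varepsilon_2$ gives
$$
\frac{U_{\new}}{\beta D_{\new}} + q - \frac{V}{\Delta} = -\varepsilon_2 - \frac{UF}{\beta^{\delta+1}DD_{\new}} + \frac{G + qF}{\beta^{\delta+1}D_{\new}}.
$$

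Then I would estimate. The term $\varepsilon_2$ contributes at most $\varepsilon/2$. Every remaining term carries the factor $\beta^{-(\delta+1)}$ and is bounded uniformly in $\delta$: one uses $|q| \le A$, $|F| \le A$ and $|G| \le A(1 + A/(|\beta|-1))$ from the hypotheses, the bound $|U/D| \le |\beta|K + \varepsilon/2$ coming from $U \in D(\beta I)^{\varepsilon/2}$, and the lower bounds $|D|,|D_{\new}| \ge D_{\min}$ (both $D$ and $D_{\new}$ being partial sums of an admissible divisor, by~\eqref{Denominator}). This produces a constant $C = C(\beta,\A,K,\varepsilon,D_{\min})$ with the three $\delta$-decaying terms bounded by $C/|\beta|^{\delta+1}$. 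Fixing $\delta$ so large that $C/|\beta|^{\delta+1} < \varepsilon/2$ makes the whole right-hand side smaller than $\varepsilon$ in modulus, so $U_{\new}/(\beta D_{\new}) + q \in B(V/\Delta,\varepsilon) \subset I + q$. Hence $z := U_{\new}/(\beta D_{\new}) \in I$, and $U_{\new} = \beta z\,D_{\new} \in D_{\new}\,\beta I \subset D_{\new}(\beta I)^{\varepsilon/2}$, which is the claim.

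I expect the only real obstacle to be the change of denominator from $D$ to $D_{\new}$, which is exactly what rules out a verbatim copy of the multiplication proof and introduces the factors $D/D_{\new}$. The point that saves the argument is that, once the $q$-terms cancel, the discrepancy between $U/D$ and $U/D_{\new}$ is proportional to $F/\beta^{\delta+1}$ and therefore decays; controlling it requires precisely the denominator bound $|D_{\new}| \ge D_{\min}$ and the boundedness of $U/D$ built into the hypothesis $U \in D(\beta I)^{\varepsilon/2}$.
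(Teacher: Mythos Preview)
Your proof is correct and rests on the same core idea as the paper's: use the inclusion $B(V/\Delta,\varepsilon)\subset I+q$ coming from $\Digit$, separate the error into the truncation contribution (controlled by \eqref{ALFA}) and a $\delta$-decaying contribution, and choose $\delta$ so that the latter is below $\varepsilon/2$. The bookkeeping, however, is organized differently. You divide through by $\beta D_{\new}$ and assemble a single estimate showing $U_{\new}/(\beta D_{\new})\in I$ directly (so in fact you land in $D_{\new}\,\beta I$, slightly stronger than needed). The paper instead factors
\[
U_{\new}=D_{\new}\Bigl(\beta\bigl(\tfrac{V}{\Delta}-q+C_1\bigr)+\tfrac{C_2}{\beta^\delta}\Bigr),
\]
absorbing the truncation error $C_1$ into the element of $I$ and leaving the $\delta$-decaying $C_2/\beta^\delta$ as the $\varepsilon/2$-fattening. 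Your route is a touch more streamlined; the paper's route has the practical advantage of producing the explicit inequality \eqref{DeltaDivision} for $\delta$ in terms of $A$, $K$, $\varepsilon$, $D_{\min}$, which is then used verbatim in the worked examples of Section~\ref{examples}.
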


\begin{proof} For $V=\Trunc_\alpha(U)$ and $\Delta=\Trunc_\alpha(D)$, denote $\alpha_1$ and $\alpha_2$ such that $U=V+\alpha_1$ and $D=\Delta+ \alpha_2$. Clearly, $|\alpha_1|, |\alpha_2| < \alpha$. Let us rewrite
$$
\beta (U - q D) = \beta \Bigl(V + \alpha_1 - q(\Delta + \alpha_2)\Bigr) = \beta \Delta \Bigl(\tfrac{V}{\Delta} - q \Bigr)+\beta \alpha_1 - \beta q \alpha_2 \,.
$$
Using $D_{\new} = \Delta + \alpha_2 + F \beta^{-\delta-1}$, we obtain
$$
\beta (U - q D) = \beta D_{\new} \Bigl(\tfrac{V}{\Delta} - q \Bigr) - \beta \bigl(\alpha_2 + F \beta^{-\delta - 1}\bigr)\Bigl(\tfrac{V}{\Delta} - q \Bigr)+ \beta \alpha_1 - \beta q \alpha_2 \,.
$$
Thus
$$
U_{\new} = \beta D_{\new} \Bigl(\tfrac{V}{\Delta} - q \Bigr) + \beta C_1 D_{\new} + \tfrac{1}{\beta^\delta}C_2 D_{\new} \, ,
$$
where the values $C_1, C_2 \in \C$ are found so that they satisfy $C_1 D_{\new} = \alpha_1 - \alpha_2 \tfrac{V}{\Delta}$ and $C_2 D_{\new} = G - F \Bigl( \tfrac{V} {\Delta} - q \Bigr)$. By Lemma \ref{InDomain}, we know that $\left|\tfrac{V} {\Delta}\right| < |\beta| K + \varepsilon$, and thus the modulus of $C_1$ can be, by virtue of \eqref{ALFA}, bounded by
$$
|C_1| \leq \frac{\alpha + \alpha( K|\beta|+\varepsilon)}{|D_{\new}|} \leq \frac{\alpha (1+|\beta|K + \varepsilon)}{D_{\min}} <\tfrac{\varepsilon}{2} < \varepsilon \, .
$$
Thanks to $\frac{V}{\Delta} \in (\beta I)^{\varepsilon}$, the choice of the function $\Digit$ implies that $q = \Digit(\tfrac{V}{\Delta})$ satisfies $B(\tfrac{V}{\Delta}, \varepsilon) \subset I +q$. As $|C_1|<\varepsilon$, we have  $\tfrac{V}{\Delta} + C_1 \in B(\tfrac{V}{\Delta}, \varepsilon)\subset I + q$ . Or, equivalently, $\tfrac{V}{\Delta} - q + C_1 \in I$. We can write
$$
U_{\new} \in D_{\new}(\beta I + \tfrac{C_2}{\beta^\delta}) \, .
$$
To complete the proof, we need to find $\delta$ such that $\Bigl| \tfrac{C_2}{\beta^\delta} \Bigr| < \tfrac{\varepsilon}{2}$. The number $C_2$ can be bounded as follows:
$$
|C_2| \leq \frac{|G| + |F| \Bigl| \tfrac{V}{\Delta} - q \Bigr|}{|D_{\new}|}\leq\frac{1}{D_{\min}} \left( A\Bigl(1 + \tfrac{A}{|\beta| - 1}\Bigr) + A (K + \varepsilon)\right) \, .
$$

Therefore, it is sufficient (and possible at the same time) to choose $\delta \in \N$ such that
\begin{equation}\label{DeltaDivision}
    \frac{A}{D_{\min}} \left( 1+\tfrac{A}{|\beta|-1} + K + \varepsilon \right) < \tfrac{\varepsilon}{2} \ |\beta|^\delta\,.
\end{equation}
\end{proof}

\begin{theorem}\label{thm:division} Let $I \subset  \C$  and  $\varepsilon > 0$ ensure the OL Property of a numeration system $(\beta, {\mathcal{A}})$ and  $0 \in I$. Let  strings $0\decdot n_1 n_2 \cdots $  and  $0\decdot d_1 d_2 \cdots $ satisfying \eqref{Denominator} represent numbers $N$ and $D$ resp. Then computing  $N/D$ can be performed by the Trivedi-Ercegovac algorithm.

\end{theorem}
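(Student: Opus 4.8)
The plan is to mirror the proof of Theorem~\ref{thm:multi}: by Lemma~\ref{fporadku} it suffices to show that the auxiliary sequence $(W_k)$ stays bounded, and then the convergence $Q = \lim_k Q_k = N/D$ follows automatically. To control $(W_k)$ I would prove, by induction on $k \geq 0$, the invariant
$$
W_k \in D_{k+\delta}(\beta I)^{\varepsilon/2}\,,
$$
which is exactly the hypothesis needed to feed Lemma~\ref{inductionDivision} at the subsequent step. Note that this invariant also makes $q_k = \SelectD(W_k, D_{k+\delta})$ well defined at each step, since $W_k \in D_{k+\delta}(\beta I)^{\varepsilon/2}$ is precisely the domain condition of Definition~\ref{selectDiv} (combined with Lemma~\ref{InDomain}); throughout, the inequalities $|D_k| \geq D_{\min}$ needed in Lemma~\ref{inductionDivision} are supplied by the preprocessing condition~\eqref{Denominator}.

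For the base case, recall $W_0 = 0$. Since $0 \in I$ and $0 = \beta \cdot 0 \in \beta I$, we have $0 \in (\beta I)^{\varepsilon/2}$, hence $0 \in D_\delta(\beta I)^{\varepsilon/2}$, which is the invariant at $k=0$ (here $|D_\delta| \geq D_{\min} > 0$ by \eqref{Denominator}, as $\delta \geq 1$).

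For the inductive step I would match the recurrence~\eqref{W_k_definition} to the template of Lemma~\ref{inductionDivision} by setting $U = W_{k-1}$, $D = D_{k-1+\delta}$, $q = q_{k-1} = \SelectD(W_{k-1}, D_{k-1+\delta})$, and $U_{\new} = W_k$, $D_{\new} = D_{k+\delta}$. This forces the identifications
$$
G = n_{k+\delta} - Q_{k-1} d_{k+\delta} \qquad \text{and} \qquad F = d_{k+\delta}\,\beta^{\,1-k}\,,
$$
the latter because $D_{\new} - D = D_{k+\delta} - D_{k-1+\delta} = d_{k+\delta}\beta^{-(k+\delta)} = F\beta^{-\delta-1}$. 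The crux is to verify the two size constraints of Lemma~\ref{inductionDivision}: since $|\beta| > 1$ and $k \geq 1$ we get $|F| = |d_{k+\delta}|\,|\beta|^{1-k} \leq A$, while $|Q_{k-1}| \leq A/(|\beta|-1)$ yields $|G| \leq A\bigl(1 + \tfrac{A}{|\beta|-1}\bigr)$, exactly the required bounds. With these in hand, Lemma~\ref{inductionDivision} gives $W_k \in D_{k+\delta}(\beta I)^{\varepsilon/2}$, closing the induction.

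Finally, I would cash in the invariant: the OL Property makes $I$ bounded, so $(\beta I)^{\varepsilon/2}$ lies in a ball of some radius $R$, and $|D_{k+\delta}| \leq A/(|\beta|-1)$ is uniformly bounded, whence $|W_k| \leq \tfrac{A}{|\beta|-1}R$ for all $k$. Thus $(W_k)$ is bounded, and Lemma~\ref{fporadku} delivers $Q = N/D$. I expect the only genuinely delicate point to be the bookkeeping of the inductive step --- aligning the indices so the single-step Lemma~\ref{inductionDivision} applies, and confirming that the $k$-dependent quantity $F = d_{k+\delta}\beta^{1-k}$ still obeys $|F| \leq A$ uniformly in $k$; everything else is a direct transcription of the multiplication argument.
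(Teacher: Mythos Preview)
Your proposal is correct and follows essentially the same argument as the paper's proof: both establish the invariant $W_k \in D_{k+\delta}(\beta I)^{\varepsilon/2}$ by induction, invoking Lemma~\ref{inductionDivision} at the inductive step with the same identifications of $U$, $D$, $F$, $G$ (up to a harmless shift in indexing from $k \to k+1$ in the paper versus $k-1 \to k$ in your write-up), and then deduce boundedness of $(W_k)$ from boundedness of $I$ and of $|D_{k+\delta}|$. Your version is in fact slightly more explicit in checking that $|F| \leq A$ and that $\SelectD$ is well defined at each step.
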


\begin{proof}
Let $\alpha > 0$ be chosen to fulfill \eqref{ALFA}, and the delay equal to $\delta$ from Lemma \ref{inductionDivision}. For the Trivedi-Ercegovac division algorithm, we use the function $\SelectD$ from Definition \ref{selectDiv}. According to Lemma \ref{fporadku}, for correctness of the algorithm one has to show that the sequence $(W_k)$ is bounded.

We prove by induction on the index $k \in \N$ that, for each $k \geq 0$, the value $W_k$ satisfies $W_{k} \in  D_{\delta +k}( \beta I)^{\varepsilon/2} \, .$

As $W_0=0 \in I$, obviously $W_0\in (\beta I)^{\varepsilon/2}D_{\delta}$. According to \eqref{W_k_definition}, the value $W_{k+1}$ is determined from $W_k$ by
$$
W_{k+1} = \beta (W_k - q_k D_{k+\delta}) + (n_{k+1+\delta} - Q_k d_{k+1+\delta}) \beta^{-\delta} \, ,
$$
and
$$
D_{k+1+\delta} = D_{k + \delta } + \tfrac{d_{k+1+\delta}}{\beta^{\delta + 1 + k}} \, .
$$
Now we apply Lemma \ref{inductionDivision} with $U = W_k$, $D = D_{k+\delta}$, $F = \tfrac{d_{\delta+k+1}}{\beta^k}$ and $G = n_{k+1+\delta} - Q_k d_{k+1+\delta}$, and obtain the implication
$$
W_k \in D_{\delta +k}( \beta I)^{\varepsilon/2} \quad \Longrightarrow \quad W_{k+1} \in D_{\delta +k+1}( \beta I)^{\varepsilon/2} \, .
$$
The OL Property guarantees that the set $I$ is bounded, and the values $D_k$ are bounded by $\tfrac{A}{|\beta|-1}$ in modulus. Thus the sequence $(W_k)$ is bounded too, as we wanted to prove.
\end{proof}


\section{OL Property}

For any (real or complex) base $\beta$, there exists a suitable alphabet ${\mathcal{A}}$ such that $(\beta, {\mathcal{A}})$ has the OL Property. For instance, the set $I$ fulfilling the OL Property can be the ball $B(0, 1) = I$ with the alphabet  ${\mathcal{A}}$ containing a sufficient number of elements.

Nevertheless, note that the alphabet ${\mathcal{A}}$ may generally be any subset of complex numbers containing zero. We have no general method to verify whether a given numeration system $(\beta, {\mathcal{A}})$ has the OL Property and, in particular, we are not able to check the OL Property for the most interesting alphabet, namely the minimal alphabet ${\mathcal{A}}$ allowing parallel addition and subtraction in a given base $\beta$.

We focus our attention on alphabets of contiguous integers containing zero. In the case of complex bases we study only symmetric alphabets. This restriction is in fact quite innocent, since such alphabets are preferable with respect to parallel addition and subtraction. For both real and complex bases, with alphabets of contiguous integers, we provide a straightforward manner for finding the set $I$ and checking the OL Property.


\subsection{OL Property for real bases and integer alphabets}

Redundancy of a numeration system is a necessary condition for any on-line algorithm. In this section, we consider real bases $\beta$ and alphabets ${\mathcal{A}}$ of contiguous integers containing zero. For such a system, redundancy is characterized by the inequality  $\#{\mathcal{A}} > |\beta|$. We will show that redundancy of a real numeration system (with an alphabet of contiguous integers) is also a \emph{sufficient condition} for the Trivedi-Ercegovac algorithm.

\begin{lemma}\label{intervalI}
Let $\beta$ be a real number with $|\beta|>1$ and let ${\mathcal{A}} = \{ m, \ldots, 0, \ldots, M \} \subset \Z$ with $m \leq 0 \leq M$. If $|\beta| < \#{\mathcal{A}} = M - m + 1$, then the numeration system $(\beta, {\mathcal{A}})$ has the OL Property.

\noindent  In particular:
\begin{itemize}
    \item for $\beta >1$, one of the pairs $(I,\varepsilon)$ satisfying \eqref{OL} is $I = [\lambda, \rho]$ and $\varepsilon >0$ defined by
            \begin{equation}\label{real_positive}
                \varepsilon = \frac{M - m + 1 - \beta}{2 (\beta+1)} > 0, \qquad \rho = \frac{M - 2 \varepsilon}{\beta-1}, \qquad \lambda = \frac{m + 2\varepsilon}{\beta-1} \, ;
            \end{equation}
    \item for $\beta < -1$, one of the pairs $(I,\varepsilon)$ satisfying \eqref{OL} is $I = [\lambda, \rho]$ and $\varepsilon > 0$ defined by
            \begin{equation}\label{real_negative}
                \varepsilon = \frac{M - m + 1 + \beta}{2(1-\beta)} > 0, \qquad \rho = \frac{1-m}{1-\beta}, \qquad \lambda = \frac{-M - 1}{1-\beta} \, .
            \end{equation}
\end{itemize}
\end{lemma}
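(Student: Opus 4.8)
The plan is to recast the OL Property of Definition~\ref{DefOL} as a one-dimensional covering statement and then verify that the stated triple $(I,\varepsilon)=([\lambda,\rho],\varepsilon)$ realizes it. First I would observe that for a real digit $a$ and the closed interval $I=[\lambda,\rho]$, the inclusion $B(x,\varepsilon)\subset I+a$ is equivalent to $x$ lying in the $\varepsilon$-eroded translate $[\lambda+a+\varepsilon,\ \rho+a-\varepsilon]$. Hence $(\beta,\A)$ has the OL Property with this $(I,\varepsilon)$ precisely when
$$(\beta I)^\varepsilon\ \subset\ \bigcup_{a=m}^{M}[\lambda+a+\varepsilon,\ \rho+a-\varepsilon].$$
So the whole lemma reduces to checking this single inclusion of subsets of $\R$.

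The right-hand union is governed by an overlap condition for consecutive eroded copies: the copies indexed by $a$ and $a+1$ meet without leaving a gap exactly when $\rho+a-\varepsilon\geq\lambda+(a+1)+\varepsilon$, i.e.
$$\rho-\lambda\ \geq\ 1+2\varepsilon. \qquad (\star)$$
Under $(\star)$ the union collapses to the single interval $[\lambda+m+\varepsilon,\ \rho+M-\varepsilon]$, and the inclusion becomes two endpoint inequalities comparing these ends with those of $(\beta I)^\varepsilon$. Here I would split according to the sign of $\beta$: for $\beta>1$ one has $\beta I=[\beta\lambda,\beta\rho]$, whereas for $\beta<-1$ multiplication reverses orientation, $\beta I=[\beta\rho,\beta\lambda]$, so the roles of $\lambda$ and $\rho$ interchange. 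In each case the two endpoint inequalities are linear in $\lambda,\rho,\varepsilon$, and the formulas \eqref{real_positive} (resp.\ \eqref{real_negative}) are exactly their solution at equality: substituting them makes the left end of the union coincide with $\beta\lambda-\varepsilon$ (resp.\ $\beta\rho-\varepsilon$) and the right end with $\beta\rho+\varepsilon$ (resp.\ $\beta\lambda+\varepsilon$), whence $(\beta I)^\varepsilon$ is covered.

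It then remains to confirm the two genuine constraints, and this is where the hypothesis $\#\A>|\beta|$ enters. A direct computation of $\rho-\lambda$ from the given formulas shows that $(\star)$ holds \emph{with equality}, so the eroded copies tile $[\lambda+m+\varepsilon,\rho+M-\varepsilon]$ by merely touching; and $\varepsilon$ itself is positive iff $M-m+1-\beta>0$ (case $\beta>1$) resp.\ $M-m+1+\beta>0$ (case $\beta<-1$), that is iff $\#\A=M-m+1>|\beta|$, which is precisely the redundancy assumption. I expect no conceptual difficulty here: the only delicate point is the bookkeeping of signs when $\beta<-1$ (the orientation reversal of $\beta I$ and the consequent swap of the endpoint roles), together with the verification that the stated $\varepsilon,\lambda,\rho$ are not merely admissible but are the tight values for which $(\star)$ and both endpoint inequalities become simultaneous equalities.
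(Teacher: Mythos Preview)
Your proposal is correct and follows essentially the same route as the paper. The paper likewise verifies the overlap relation $\rho-\lambda=1+2\varepsilon$ (your $(\star)$ at equality) and the two endpoint identities $\beta\rho+2\varepsilon=\rho+M$, $\beta\lambda-2\varepsilon=\lambda+m$ (with $\rho,\lambda$ swapped when $\beta<-1$); the only cosmetic difference is that the paper phrases the covering via the overlap of the full translates $I+a$ and the identity $(\beta I)^{2\varepsilon}=\bigcup_{a\in\A}(I+a)$, whereas you equivalently recast the OL condition as the inclusion of $(\beta I)^\varepsilon$ in the union of the $\varepsilon$-eroded translates.
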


\begin{proof}
Consider $\beta >1$. Since $\rho=\lambda +1 + 2\varepsilon$, the overlap of intervals $(I+a)\cap(I+(a+1)) = a+[\lambda +1, \rho]$ is of length $2\varepsilon$ for any $a\in\Z$. Equations (\ref{real_positive}) imply that $\beta \rho + 2\varepsilon = \rho + M$ and $\beta \lambda - 2\varepsilon = \lambda + m$, thus the $(2\varepsilon)$-fattening of $\beta I$ equals $(\beta I)^{2\varepsilon} = \cup_{a\in {\mathcal{A}}}(I+a)$, and \eqref{OL} holds. In the case of $\beta <-1$, by use of equations (\ref{real_negative}) we also obtain $\rho = \lambda + 1 + 2\varepsilon $. Since $\beta \lambda + 2\varepsilon = \rho + M$ and $\beta \rho - 2\varepsilon = \lambda + m$, the statement \eqref{OL} holds here as well.
\end{proof}

\begin{remark}\label{positive_extension}
If $\beta < -1$, then the interval $I=[\lambda, \rho ]$ in Lemma \ref{intervalI} always contains $0$. The same is true if $\beta >1$ and $m < 0 < M$. Thus, according to Theorems \ref{thm:multi} and \ref{thm:division}, the on-line algorithms work properly.

If $\beta >1$  and $M=0$, i.e., the alphabet consists of non-positive integers, then only non-positive numbers have a $(\beta, {\mathcal{A}})$-representation. Product or quotient of such numbers is positive, and thus without
any $(\beta, {\mathcal{A}})$-representation. Therefore, no (on-line) algorithm for multiplication or division makes sense in this case.

If $\beta > 1$ and $m = 0$, i.e., the alphabet consists of non-negative integers, then no interval $I \subset \R$ suitable for the OL Property contains $0$. Nevertheless, even in this case the Trivedi-Ercegovac algorithm can be used. The $\Select$ function just has to be slightly modified as follows: Consider the interval $I= [\lambda, \rho]$ from Lemma \ref{intervalI}. In particular, the left boundary of the interval is $\lambda= \tfrac{2\varepsilon}{\beta-1} >0$. The $\SelectM$ function given by Definition \ref{SelectMult} has as its domain the interval $(\beta I)^{\varepsilon/2}$ with its left boundary $\beta \lambda - \tfrac{\varepsilon}{2}$. Put
$$
\widetilde{\SelectM} (U) = \left\{ \begin{array}{cl}
    0            & \text{if } \ U < \beta \lambda - \tfrac{\varepsilon}{2} \, , \\
    \SelectM (U) & \text{if } \ U \geq \beta \lambda - \tfrac{\varepsilon}{2} \, .
\end{array}\right.
$$
Using this extended $\widetilde{\SelectM}$ function in the algorithm for multiplication (and analogously also the extended $\widetilde{\SelectD}$ function for division) and starting with $W_0 = 0$, we get the digit $p_0 = 0$ at the beginning on the output. Consequently, as long as $p_k = 0$, it holds that $W_k \geq \beta W_{k-1}$, due to \eqref{eq:N6}. Thus the sequence $(W_k)$ is increasing, and after several iterations, $W_k$ reaches the interval $(\beta I)^{\varepsilon/2}$. According to Lemmas \ref{AgainIn} and \ref{inductionDivision}, the value of $W_k$ then stays in $(\beta I)^{\varepsilon/2}$ in all further steps. Thus the sequence $(W_k)$ is bounded, and the algorithms work properly.
\end{remark}

A direct consequence of Lemma~\ref{intervalI} and Remark \ref{positive_extension} is the following result.

\begin{theorem}\label{realOL}
Let $\beta$ be a real number with $|\beta|>1$ and ${\mathcal{A}} = \{m, \ldots, 0, \ldots, M\} \subset \Z$. Let us assume that $m \leq 0 < M$ for $\beta > 1$, and   $m \leq 0 \leq M$ for $\beta <-1$. If $|\beta| < \#{\mathcal{A}} = M - m + 1$, then multiplication and division in the numeration system $(\beta, {\mathcal{A}})$ are on-line performable by the Trivedi-Ercegovac algorithms.
\end{theorem}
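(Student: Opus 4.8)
The plan is to read off the OL Property from Lemma~\ref{intervalI} and then feed it into Theorems~\ref{thm:multi} and~\ref{thm:division}, using the case analysis of Remark~\ref{positive_extension} to bridge the gap between ``$(\beta,\A)$ has the OL Property'' and ``the on-line algorithms actually converge''. The hypothesis $|\beta| < \#\A = M-m+1$ is exactly what Lemma~\ref{intervalI} requires, so the OL Property, together with an explicit interval $I=[\lambda,\rho]$, is immediate. The real content is therefore matching the sign conditions on $m$ and $M$ with the extra requirement $0 \in I$ demanded by Theorems~\ref{thm:multi} and~\ref{thm:division}.

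First I would split according to the sign of $\beta$. For $\beta < -1$, Lemma~\ref{intervalI} furnishes $I=[\lambda,\rho]$ via \eqref{real_negative}, and Remark~\ref{positive_extension} records that this interval always contains $0$, for any $m \le 0 \le M$. Hence Theorems~\ref{thm:multi} and~\ref{thm:division} apply verbatim, giving on-line multiplication and division. For $\beta > 1$ the hypothesis reads $m \le 0 < M$, and I would distinguish two subcases. If $m < 0 < M$, then by Remark~\ref{positive_extension} the interval $I$ of \eqref{real_positive} again contains $0$, and the two theorems apply directly. If $m = 0$ (a non-negative alphabet, with $M>0$ ensured by the hypothesis), then no admissible interval $I \subset \R$ contains $0$; here I would instead invoke the modified selection functions $\widetilde{\SelectM}$ and $\widetilde{\SelectD}$ of Remark~\ref{positive_extension}.

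In this last subcase the argument of the Remark is what I would reproduce: starting from $W_0=0$, the modified rule outputs zeros and forces the sequence $(W_k)$ to increase until it first enters $(\beta I)^{\varepsilon/2}$ after finitely many steps, after which Lemmas~\ref{AgainIn} and~\ref{inductionDivision} trap $W_k$ inside $(\beta I)^{\varepsilon/2}$ for all subsequent iterations. Boundedness of $(W_k)$ then follows, and Lemmas~\ref{bounMult} and~\ref{fporadku} yield convergence of the output to $P=XY$ and to $Q=N/D$ respectively. Since every ingredient is already in place, there is no genuine obstacle; the only point needing care is precisely this $\beta>1$, $m=0$ situation, where the plain theorems fail because $0 \notin I$ and one must check that the truncated selection does not spoil boundedness. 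I would also note that the excluded degenerate case $M=0$ — where products and quotients of non-positive numbers come out positive and hence unrepresentable — is ruled out by the strict inequality $M>0$ imposed for $\beta>1$.
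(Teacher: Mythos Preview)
Your proposal is correct and follows the same approach as the paper, which simply states that the theorem is a direct consequence of Lemma~\ref{intervalI} and Remark~\ref{positive_extension}. You have spelled out the case analysis implicit in that sentence --- negative base, positive base with $m<0<M$, and positive base with $m=0$ requiring the modified selection functions --- which is precisely the intended argument.
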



\subsection{OL Property for complex bases and integer alphabets}

The aim of this section is to prove that for any complex base $\beta \in \C$, it is always possible to find a sufficiently large symmetric alphabet ${\mathcal{A}}$ of contiguous integers, so that the system $(\beta, {\mathcal{A}})$ has the OL Property and also allows parallel addition and subtraction. Parallel addition and subtraction are the reason for choosing such a specific form of an alphabet,  see \cite{FrHePeSv} for more details.

The result we present in this section for complex bases is somehow weaker then the one presented in the previous section for real bases.

At first, let us stress two facts about the OL Property, which follow directly from its definition. Supposing that a bounded set $I$ and $\varepsilon>0$ ensure the OL Property for $(\beta, {\mathcal{A}})$, then:
\begin{itemize}
    \item $\overline{I}$ and $\varepsilon>0$ ensure the OL Property for $(\overline{\beta}, \overline{ {\mathcal{A}}})$, where $\overline{z}$ denotes the complex conjugate of the number $z$; and
    \item if $-I = I$, then $I$ and $\varepsilon>0$ ensure the OL Property for $(-\beta, {\mathcal{A}})$.
\end{itemize}

At the end of this section and also in Section~\ref{examples}, we prove the OL Property for two specific numeration systems with complex base and complex alphabet. However, in the case of a  complex base we manage to
provide a general result only for systems with a symmetric alphabet of contiguous integers:

\begin{theorem}\label{complex_beta+integer_A}
Let $\beta \in \C\setminus\R$, $|\beta|>1$ and ${\mathcal{A}} = \{ -M, \ldots, -1, 0, 1, \ldots, M\} \subset \Z $. If
\begin{equation}\label{A_size:complex_base}
    \beta \overline{\beta} + |\beta + \overline{\beta}| < \# {\mathcal{A}} = 2M+1 \, ,
\end{equation}
then the numeration system $(\beta, {\mathcal{A}})$ has the OL Property.
\end{theorem}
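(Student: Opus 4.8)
The plan is to exhibit a concrete bounded parallelogram $I$ and verify \eqref{OL} through the Lebesgue-number remark following Definition \ref{DefOL}: it suffices to produce a bounded $I$ such that the closure of $\beta I$ is covered by the interiors of the translates $I+a$, $a\in\A$, after which the required $\varepsilon>0$ exists automatically. Write $\beta=u+iv$ with $u=\mathrm{Re}(\beta)$ and $v=\mathrm{Im}(\beta)\neq 0$ (since $\beta\notin\R$). Conjugating if necessary, I may assume $v>0$: by the remark preceding the theorem the OL Property for $(\beta,\A)$ is equivalent to that for $(\overline\beta,\overline\A)=(\overline\beta,\A)$ (the alphabet is real, so $\overline\A=\A$), and the quantity $\beta\overline\beta+|\beta+\overline\beta|$ is conjugation-invariant, so the hypothesis \eqref{A_size:complex_base} is unaffected.

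The decisive feature is that the digits of $\A$ are integers, so every translate $I+a$ moves $I$ only horizontally; hence $I$ must be chosen so that $\beta I$ is thin in the imaginary direction and fits into the vertical extent of $I$. I would take the sheared slab
\[ I=\{x+iy:\ |y|\le H,\ |x-cy|\le \tfrac{W}{2}\},\qquad c=-\tfrac{u}{v}, \]
a parallelogram symmetric about $0$ whose horizontal cross-sections all have width $W$. The choice $c=-u/v$ is the crux: a direct computation gives $\mathrm{Im}\big(\beta(x+iy)\big)=v(x-cy)$, so the imaginary part of a point of $\beta I$ depends only on the coordinate $\tau=x-cy\in[-\tfrac{W}{2},\tfrac{W}{2}]$ and the $y$-dependence cancels. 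Thus the imaginary extent of $\beta I$ is exactly $[-|v|W/2,\ |v|W/2]$, which can be forced strictly inside $(-H,H)$. Fixing the height $\eta=v\tau$, one finds $\mathrm{Re}\big(\beta(x+iy)\big)=-\tfrac{|\beta|^2}{v}\,y+u\tau$, so the cross-section of $\beta I$ at that height is the interval centred at $u\tau$ of half-width $\tfrac{|\beta|^2}{|v|}H$, whereas $\bigcup_{n=-M}^{M}(I+n)$ has there the cross-section $[\,c\eta-\tfrac{W}{2}-M,\ c\eta+\tfrac{W}{2}+M]$ (a single interval once $W>1$, so consecutive translates overlap), with $c\eta=-u\tau$.

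Comparing the two cross-sections height by height reduces the whole covering to the single numerical requirement
\[ \frac{|\beta|^2}{|v|}\,H+\Big(|u|-\tfrac12\Big)W\ \le\ M, \]
together with $W>1$ and $|v|W/2\le H$. Taking $H=|v|W/2$ (the smallest admissible $H$) collapses the left-hand side to $\tfrac{W}{2}\big(|\beta|^2+2|u|-1\big)$, which, since $|\beta|>1$ forces $|\beta|^2+2|u|-1>0$, is increasing in $W$; its infimum over $W>1$ is $\tfrac12(|\beta|^2+2|u|-1)$. Hence an admissible pair $(W,H)$ exists exactly when $M>\tfrac12(|\beta|^2+2|u|-1)$, i.e. $2M+1>|\beta|^2+2|u|=\beta\overline\beta+|\beta+\overline\beta|$, which is precisely \eqref{A_size:complex_base}. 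Under that hypothesis I would fix $W$ slightly above $1$ and $H$ slightly above $|v|W/2$ so that all three inequalities are strict; then each closed cross-section of $\overline{\beta I}=\beta\overline I$ lies inside the corresponding open cross-section of $\bigcup_{n=-M}^{M}\mathrm{int}(I+n)$, yielding $\overline{\beta I}\subset\bigcup_{a\in\A}\mathrm{int}(I+a)$, and the Lebesgue-number remark produces $\varepsilon>0$ satisfying \eqref{OL}.

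The main obstacle is pinning down the geometry so tightly that the threshold comes out exactly as $\beta\overline\beta+|\beta+\overline\beta|$ rather than some weaker bound: the shear $c=-u/v$ is what makes purely horizontal (integer) translations able to cover $\beta I$ at all, and the optimization over $(W,H)$ must be carried out carefully to match the constant. The remaining care is routine but must not be skipped, namely arranging strict inequalities so the covering is by open sets: the overlap condition $W>1$ and the strict vertical fit $|v|W/2<H$ are exactly what guarantee that the junction points between consecutive translates, and the extremal heights of $\beta I$, fall in the interiors.
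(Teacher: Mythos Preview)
Your proof is correct and follows essentially the same approach as the paper: your sheared parallelogram $I$ is literally the paper's parallelogram with vertices $\pm A,\pm B$ (the correspondence is $H=A_2$, $W=2x_0$, and your shear relation $c=-u/v$ is exactly the paper's condition \eqref{(-1)}), so that your cross-section inequality $\frac{|\beta|^2}{|v|}H+(|u|-\frac12)W<M$ is the paper's inequality \eqref{(5)} in different notation. Your horizontal cross-section analysis and your direct handling of arbitrary $\mathrm{Re}\,\beta$, together with the appeal to the Lebesgue-number remark, streamline the presentation relative to the paper's vertex comparison and its preliminary reduction to $\mathrm{Re}\,\beta\le 0$, but the geometry and the optimization are identical.
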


\begin{proof}
First we assume that the real and imaginary parts of $\beta$ fulfil $\Re \beta \leq 0$ and $\Im \beta >0$. We define $I \subset \C$ to be a parallelogram with vertices $A, B, -A, -B$. Clearly, $I$ is centrally symmetric (i.e., $-I = I$). We choose the points $A = A_1 + i A_2, B = B_1 +i B_2 \in \C$ to satisfy
\begin{equation}\label{AB}
    0< \Im  A = \Im  B\,, \quad \Re  A < \Re B \quad \text{and} \quad \Im \,(\beta B) = \Im \,(-\beta A) \, .
\end{equation}
The previous assumptions imply $A_2 = B_2$, $0< A_2$ and $B_1 = A_1 +  2x_0$ for some $x_0 > 0$. In this notation, $A + B = 2 (A + x_0)$, and thus the equality  $\Im \,(\beta B) = \Im \,(-\beta A)$ gives $0 = \Im \,\bigl(\beta(A + B)\bigr) = \Im (2 \beta (A + x_0)) = 2 ( (A_1 + x_0) \Im \beta + A_2 \Re \beta)$. It implies
\begin{equation}\label{(-1)}
    A_1 + x_0 = -\frac{\Re \beta}{ \Im \beta} A_2 \, .
\end{equation}
Consequently, if we fix $x_0 > 0$ and $A_2 > 0$, then the points $A, B$ are fully determined by \eqref{AB}. The sets $I$ and $\beta I$ are depicted on Figure~\ref{figure_OL_11}.

%
	\begin{figure}[h]
		\centering
		\subfloat[]{\includegraphics[width=0.45\textwidth]{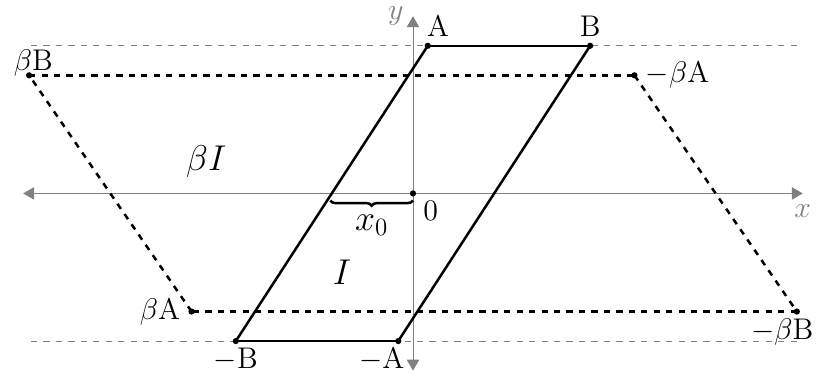}\label{figure_OL_11}}
		\hfill
		\subfloat[]{\includegraphics[width=0.45\textwidth]{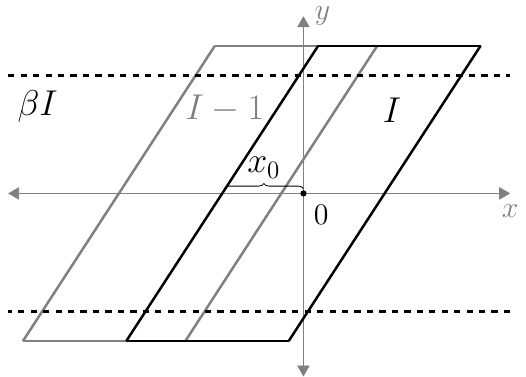}\label{figure_OL_12}}
		\caption{Construction of the set $I$ fulfilling the OL Property for a numeration system with complex base and symmetric integer alphabet.}
        \label{OL_complex}
	\end{figure}


Choose the length of the edge between $A$ and $B$, i.e., the value $2 x_0$ to be bigger than $1$, namely:
\begin{equation}\label{x_0}
    2x_0>1 \,.
\end{equation}
This choice guarantees that the interiors of the neighboring copies of $I$ overlap, i.e., $(I^o +a) \cap (I^o +a+1) \neq \emptyset$ for all $a, a+1 \in {\mathcal{A}}$, see Figure~\ref{figure_OL_12}. Consequently, the set $\cup_{a\in {\mathcal{A}}}(I+a)$ is the parallelogram with vertices $A-M, B+M, -A+M, -B-M$, which is centrally symmetric. If the coordinates of the parallelograms $\beta I$ and $\bigcup_{a\in{\mathcal{A}}} (I+a)$ satisfy the following inequalities:
\begin{equation}\label{(1)}
    \Im A =\Im B > \Im (\beta B) \, ,
\end{equation}
\begin{equation}\label{(2)}
    \Re(\beta B)  > \Re A - M \, ,
\end{equation}
\begin{equation}\label{(3)}
    \Re(\beta A)  > \Re (-B) - M \, ,
\end{equation}
then the set $\beta I$ is covered by interiors  of copies of $I$, i.e.,
$$
\beta I \subset \bigcup_{a\in {\mathcal{A}}}(I^o+a) \, .
$$
To complete the proof, we have to find $x_0 > 0$ and $A_2 > 0$ such that the four  inequalities \eqref{x_0}, \eqref{(1)}, \eqref{(2)} and \eqref{(3)} hold. Let us express the inequalities \eqref{x_0}, \eqref{(1)} and \eqref{(2)} using $x_0$  and $A_2$. (The inequality \eqref{(3)} will be discussed later.)

As $\beta B = \tfrac12 \beta (A+B) + \tfrac12 \beta (B-A)$, by \eqref{AB} we have $\Im (\beta B) =  \tfrac12 \Im \bigl(\beta (A+B) \bigr) + \tfrac12 \Im \bigl(\beta (B-A) \bigr) = 0 + \tfrac12 \Im \bigl(\beta (2 x_0) \bigr) = x_0 \Im \beta$. Thus, the inequality \eqref{(1)} in fact means:
\begin{equation}\label{(4)}
    A_2 > x_0 \Im \beta \,.
\end{equation}
As $B=A+2x_0$, we have
$$
\Re (\beta B) = 2x_0 \Re \beta + A_1\Re \beta - A_2\Im \beta = x_0 \Re \beta + (x_0+A_1)\Re \beta - A_2\Im \beta \, .
$$
Using \eqref{(-1)}, we obtain
$$
\Re (\beta B) = x_0 \Re \beta -\frac{(\Re \beta)^2}{ \Im \beta} A_2- A_2\Im \beta=   x_0 \Re \beta - \frac{A_2}{ \Im \beta} \Bigl(\bigl(\Re \beta\bigr)^2 + \bigl(\Im \beta\bigr)^2\Bigr) \, .
$$
The inequality \eqref{(2)} may thus be reformulated (using \eqref{(-1)} repeatedly) into
\begin{equation}\label{(5)}
    M > \frac{A_2}{ \Im \beta} (\beta \overline{\beta} - \Re \beta) - x_0(\Re \beta +1)\,.
\end{equation}

We now work with $\beta$ satisfying $\Re \beta \leq 0$. For such $\beta$, the assumption \eqref{A_size:complex_base} on cardinality of the alphabet $\# {\mathcal{A}} = 2M+1 > \beta \overline{\beta} + |\beta + \overline{\beta}|$ means:
$$
M > \tfrac12 \bigl( \beta\overline{\beta} - 2 \Re \beta - 1 \bigr) \,.
$$
This strict inequality allows us to find $x_0 > \tfrac12$ such that
$$
M> x_0 \bigl( \beta\overline{\beta} - 2\Re \beta  -1\bigr)= x_0 \bigl( \beta\overline{\beta} - \Re \beta \bigr)  - x_0\bigl( \Re \beta +1\bigr)\,.
$$
Again, because of the previous strict inequality and the assumption $\Im \beta >0$, one can find $A_2>0 $ such that
$$
\frac{A_2}{\Im \beta} > x_0 \quad \text{and} \quad  M > \frac{A_2}{\Im \beta} \bigl(\beta\overline{\beta} - \Re \beta \bigr) - x_0 \bigl(\Re \beta + 1 \bigr)\,.
$$
It means that there exist $x_0$ and $A_2$ such that \eqref{x_0}, \eqref{(4)} and \eqref{(5)} are fulfilled, or, equivalently, \eqref{x_0}, \eqref{(1)} and \eqref{(2)} are fulfilled.

It remains to show that \eqref{(3)} is satisfied as well. We do so by proving that $\Re(\beta A) \geq \Re(\beta B)$ and $\Re A \geq \Re (-B)$, and thus validity of \eqref{(2)} implies validity of \eqref{(3)}.

As $B = A + 2 x_0$, we have $\Re (\beta A) = \Re (\beta B) - 2 x_0 \Re \beta$. Since $x_0 > 0$ and $\Re \beta \leq 0$, obviously $\Re(\beta A) \geq \Re(\beta B)$.

From $A + B = 2 (A + x_0)$, we get by \eqref{(-1)} that $\Re (A + B) = 2 (A_1 + x_0) > 0$, and thus $\Re A \geq \Re (-B)$.

Now we can summarize that the proof of the theorem is complete for the case $\Re \beta \leq  0$ and $\Im \beta > 0$. Since the set $I$ we used to demonstrate the OL Property is centrally symmetric (i.e., $-I = I$) and the alphabet satisfies $\overline{{\mathcal{A}}} = {\mathcal{A}}$, the OL Property is possessed also by the numeration systems $(-\beta, {\mathcal{A}})$, $(\overline{\beta}, {\mathcal{A}})$ and $(-\overline{\beta}, {\mathcal{A}})$. Therefore, the proof is complete for all bases $\beta$, $|\beta| > 1$ with $\Im \beta \neq 0$.
\end{proof}

\begin{remark}
In case  $\Re \beta =0$, the condition of Theorem \ref{complex_beta+integer_A} has the form $\# {\mathcal{A}} > \beta\overline{\beta}$. For a complex numeration system, this is in fact the necessary condition of redundancy as defined in \cite{Kurka}.   Therefore, the bound given in Theorem \ref{complex_beta+integer_A} is optimal for the case $\Re \beta = 0$. An example of such a numeration system is the Knuth system with base $\beta =2i$, for details see Section \ref{examples}. If $\Re \beta \neq 0$, we do not know whether the bound $\beta \overline{\beta} + |\beta + \overline{\beta}|$ is optimal for $\# {\mathcal{A}}$. Unlike for real bases, we have no general result for complex bases with alphabets of contiguous integers ${\mathcal{A}}\subset\Z$ which are not centrally symmetric.
\end{remark}

\medskip

For a  complex base $\beta$ a  complex alphabet may be preferable. For instance,
the alphabet   ${\mathcal{A}} = \{0, \pm 1, \pm i\}$   is closed under multiplication and allows parallel addition  with the base $\beta = -1+i$ (the  redundant Penney numeration system). Fig.~\ref{Penney_picture} shows that this numeration system has the (OL) Property.

\begin{figure}[h]
    \centering
    \includegraphics[scale=0.45]{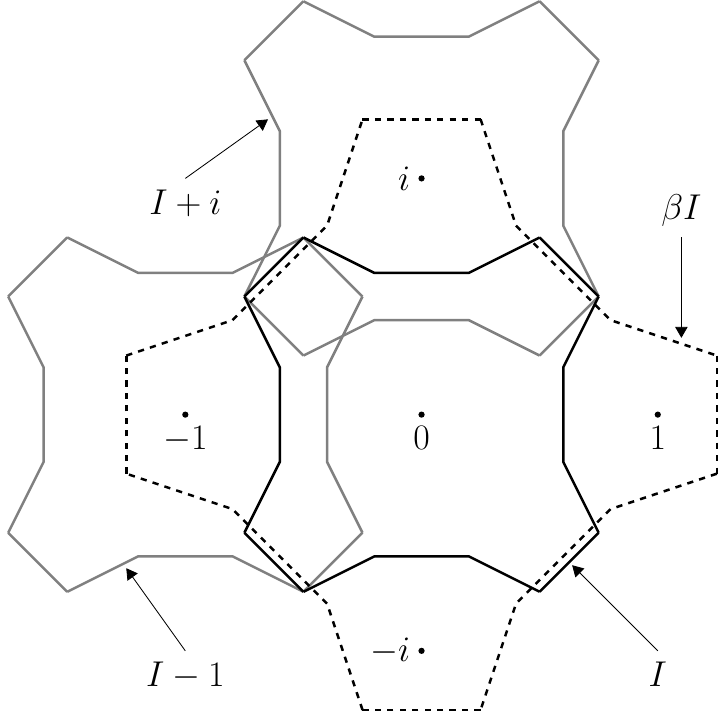}
    \caption{Penney numeration system with base $\beta = -1 + i$ and alphabet ${\mathcal{A}} = \{0, \pm 1, \pm i\}$ fulfills the OL Property, due to the ``star-shaped" set $I$ illustrated hereby.
    \label{Penney_picture}}
\end{figure}



\section{Parameters in on-line algorithms}

In this whole section, we assume that the numeration system $(\beta, {\mathcal{A}})$ satisfies the OL Property. In order to be able to use the on-line algorithms, we need to determine one parameter, namely $\delta$, for multiplication, and two parameters $\delta$ and $D_{\min}$ for division. The inequalities \eqref{deltatMult} and \eqref{DeltaDivision} provide formulae for $\delta$, given the bounded set $I \subset \C$ and the parameter $\varepsilon > 0$ from the OL Property, and also given the parameter $D_{\min}$.
The main attention in this  section is devoted to  the problem of how to determine  $D_{\min}$.  At the end of this section we touch the  question of the optimality of the parameters occurring in the on-line algorithms.


\subsection{Preprocessing of divisor and $D_{\min}$}

By {\it preprocessing of divisor}, we mean a transformation of the divisor into the form required in \eqref{Denominator}:
$$
0 \decdot d_1 d_2 d_3 \cdots \quad \text{ with } d_j \in {\mathcal{A}} \text{ satisfying } |D_k| = \bigl| \sum_{j=1}^k d_j \beta^{-j} \bigr| \geq D_{\min} \ \text{for all } k \in \N, k \geq 1 \, .
$$


\medskip

In particular, for $k=1$, we need $d_1 \neq 0$. Therefore, the transformation consists at least in shifting the fractional point to the most significant non-zero digit of the representation of the divisor, i.e., we multiply the divisor by a suitable power of $\beta$, and, after obtaining the result of the division, we must take this fact into account.

Let us denote
\begin{equation}\label{defR}
    \RR = \Bigl\{\bigl|  \sum_{j=1}^{\infty}d_j\beta^{-j} \bigr|\  :\   d_1\neq 0, d_j \in {\mathcal{A}}\Bigr\}.
\end{equation}
If $\inf \RR > 0$, then one can put $D_{\min} = \inf \RR$ into the on-line algorithm for division, and nothing else than shifting the fractional point is needed.  In our further considerations about the parameter $D_{\min}$, the following notion plays a key role.

\begin{definition}
Let $(\beta, {\mathcal{A}})$ be a numeration system. If $0= \sum_{j=1}^{\infty} z_j\beta^{-j} $, where $z_j \in {\mathcal{A}}$ for all $ j \geq 1$ and $z_k\neq 0$ for at least one index $k$, then  the sequence $z_1, z_2, z_3, \ldots$ is called a \em{non-trivial $(\beta, {\mathcal{A}})$-representation of zero}.
\end{definition}

The relation between representations of zero and $\RR$ is obvious:

\begin{lemma}\label{R}
$\inf \RR = 0$ if and only if $0$ has a non-trivial $(\beta, {\mathcal{A}})$-representation.
\end{lemma}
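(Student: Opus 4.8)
The plan is to prove both directions of the equivalence $\inf \RR = 0 \iff 0$ has a non-trivial $(\beta, \A)$-representation, where $\RR$ is defined in \eqref{defR} as the set of moduli $\bigl|\sum_{j=1}^\infty d_j\beta^{-j}\bigr|$ over all digit strings with $d_1 \neq 0$. The key observation tying the two sides together is that a string realizing a small value of $\RR$ is ``close'' to being a representation of zero, and conversely a representation of zero supplies strings of arbitrarily small modulus in $\RR$.

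For the direction $(\Leftarrow)$, suppose $z_1 z_2 z_3 \cdots$ is a non-trivial $(\beta, \A)$-representation of zero, so $\sum_{j=1}^\infty z_j \beta^{-j} = 0$ with some $z_k \neq 0$. Let $k_0$ be the smallest index with $z_{k_0} \neq 0$. Then the shifted string $z_{k_0} z_{k_0+1} \cdots$ has leading digit $z_{k_0} \neq 0$ and represents $\beta^{k_0 - 1}\sum_{j \geq k_0} z_j \beta^{-j} = \beta^{k_0-1} \cdot 0 = 0$; hence $0 \in \RR$, so certainly $\inf \RR = 0$.

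For the direction $(\Rightarrow)$, I would argue by a compactness/diagonal extraction. Assume $\inf \RR = 0$; then there is a sequence of digit strings $(d^{(n)})_{n\geq 1}$, each with nonzero leading digit, whose represented values $S_n = \sum_{j=1}^\infty d^{(n)}_j \beta^{-j}$ satisfy $|S_n| \to 0$. Since $\A$ is finite, the space $\A^{\N}$ of digit strings is compact in the product topology (equivalently, by a diagonal argument one extracts a subsequence along which every coordinate $d^{(n)}_j$ eventually stabilizes to some digit $z_j \in \A$). The map sending a string to its value $\sum_j d_j \beta^{-j}$ is continuous for this topology because $|\beta| > 1$ makes the tail $\sum_{j > L} d_j \beta^{-j}$ uniformly small; hence the limiting string $z_1 z_2 \cdots$ has value $\lim S_n = 0$. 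Crucially, since each $d^{(n)}_1$ lies in the finite set $\A \setminus \{0\}$, by passing to a further subsequence we may assume $d^{(n)}_1 = z_1$ is a fixed nonzero digit, so $z_1 \neq 0$ and the limiting string is genuinely non-trivial. Thus $0 = \sum_{j=1}^\infty z_j \beta^{-j}$ is a non-trivial representation of zero.

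The main obstacle is the $(\Rightarrow)$ direction, specifically ensuring that the limit string retains a \emph{nonzero} leading digit: a naive limit of strings with small value could in principle converge to the all-zero string. Pinning down a fixed nonzero first digit by pigeonholing on the finite alphabet $\A\setminus\{0\}$ before extracting the diagonal subsequence is the decisive step that rules this out and guarantees non-triviality. The supporting continuity estimate (uniform smallness of tails, using $|\beta|>1$) is routine and parallels the boundedness arguments already used in Lemmas \ref{bounMult} and \ref{fporadku}.
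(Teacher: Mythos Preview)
Your argument is correct. The paper does not actually prove this lemma: it introduces it with the sentence ``The relation between representations of zero and $\RR$ is obvious'' and states it without any justification. Your proof therefore supplies what the paper omits. The $(\Leftarrow)$ direction is immediate, as you note, since a non-trivial representation of zero (after shifting to its first nonzero digit) directly exhibits $0 \in \RR$. For the $(\Rightarrow)$ direction your compactness/diagonal argument is the standard and correct way to turn a sequence of strings with values tending to $0$ into a single string representing $0$; your observation that one must first pigeonhole on the finite set $\A \setminus \{0\}$ to freeze a nonzero first digit before extracting the limit is exactly the point that makes the argument work and prevents collapse to the trivial all-zero string. The continuity of the string-to-value map follows, as you say, from the uniform tail bound $\bigl|\sum_{j>L} d_j\beta^{-j}\bigr| \le A\,|\beta|^{-L}/(|\beta|-1)$.
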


As already mentioned, in numeration systems without non-trivial representations of zero, the determination of $D_{\min }$ and the preprocessing of the divisor are simple. In particular,  if   $\beta$  is positive and $\mathcal{A}$ contains only non-negative or only non-positive digits, then  $0$ has only the trivial representation  and we can take  $D_{\min} = \tfrac{1}{\beta} \min\{|a|: a\in \mathcal{A}\}$. In numeration systems having a non-trivial representation of zero, the determination of $D_{\min }$ and the divisor preprocessing are more laborious, and no general recipe applicable to all bases is available. The following lemma helps to identify such numeration systems.

\begin{lemma}\label{ReprezZero}
Let $\beta > 1$ and $\{ -1, 0, 1 \} \subset {\mathcal{A}} = \{m, \ldots, 0, \ldots, M\} \subset \Z$. Then $0$ has a non-trivial $(\beta, {\mathcal{A}})$-representation if and only if
$$
\beta \leq \max \{M + 1, -m + 1\} \, .
$$
\end{lemma}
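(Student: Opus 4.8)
The plan is to translate the existence of a non-trivial representation of $0$ into a statement about which integers are representable, and then read off the bound from the interval of representable numbers. Write $S = \{\sum_{j\ge1}c_j\beta^{-j} : c_j\in\A\}$ for the set of reals represented by digit strings placed entirely to the right of the fractional point. I first claim that $(\beta,\A)$ admits a non-trivial representation of $0$ if and only if $S$ contains a nonzero integer $n$ with $-M\le n\le -m$. For the forward implication, take a non-trivial representation $0=\sum_{j\ge1}z_j\beta^{-j}$ and let $k$ be the least index with $z_k\ne0$; multiplying by $\beta^{k-1}$ shifts it to a representation with nonzero leading digit, say $0=z_1\beta^{-1}+\sum_{i\ge1}z_{i+1}\beta^{-i-1}$ with $z_1\ne0$, so that $-z_1=\sum_{i\ge1}z_{i+1}\beta^{-i}\in S$ is a nonzero integer in $[-M,-m]$. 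Conversely, given such an $n=\sum_{i\ge1}c_i\beta^{-i}\in S$, the digit $-n$ belongs to $\A$, and prepending it yields $(-n)\beta^{-1}+\beta^{-1}\sum_{i\ge1}c_i\beta^{-i}=0$, a non-trivial representation.

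Next I locate $S$. Taking constant digit strings shows $L,R\in S\subseteq[L,R]$, where $L=\tfrac{m}{\beta-1}$ and $R=\tfrac{M}{\beta-1}$. A direct computation (using $\beta-1>0$ and $m\le-1\le 1\le M$) gives the equivalences $1\in[L,R]\iff\beta\le M+1$ and $-1\in[L,R]\iff\beta\le -m+1$, since $L<0<1$ and $-1<0<R$ hold automatically while $1\le R$ and $-1\ge L$ are the binding constraints. Hence the hypothesis $\beta\le\max\{M+1,-m+1\}$ is \emph{exactly} the statement that $1$ or $-1$ lies in $[L,R]$.

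For the \emph{only if} direction, a non-trivial representation yields, by the claim, a nonzero integer $n\in S\subseteq[L,R]$. If $n\ge1$ then $R\ge1$, giving $\beta\le M+1$; if $n\le-1$ then $L\le-1$, giving $\beta\le-m+1$; in either case $\beta\le\max\{M+1,-m+1\}$. For the \emph{if} direction, assume without loss of generality $\beta\le M+1$ (the case $\beta\le-m+1$ follows by applying the argument to $-\A$, which interchanges $M$ and $-m$). Then $1\in[L,R]$, and moreover $\#\A=M-m+1\ge M+2>M+1\ge\beta$. The strict inequality $\#\A>\beta$ forces $R-L\ge1$, so consecutive translates $[L,R]+c$ overlap and their union over $c\in\A$ covers $\beta[L,R]=[L+m,R+M]$; as in the proof of Lemma~\ref{intervalI}, this self-covering lets one build, for any target in $[L,R]$, a digit string whose remainder after $n$ steps stays in the bounded set $[L,R]$ and is scaled by $\beta^{-n}\to0$. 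Thus every point of $[L,R]$, in particular $1$, lies in $S$, and since $1$ is a nonzero integer in $[-M,-m]$, the claim produces a non-trivial representation of $0$.

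I expect the main work to be the \emph{if} direction, specifically justifying that membership in $[L,R]$ implies genuine representability: the overlap condition $\#\A>\beta$ must be converted into the covering $\beta[L,R]\subseteq\bigcup_{c\in\A}([L,R]+c)$ and thence into convergence of the iteratively produced expansion. The remaining equivalences are elementary once $S$ is pinned between the two geometric extremes $L$ and $R$.
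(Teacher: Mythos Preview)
Your proof is correct, and your ``only if'' direction is essentially the paper's argument in different packaging: both note that if $z_1\ne0$ then the tail must realize an integer of modulus at least $1$, forcing $R\ge1$ or $L\le-1$.

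The ``if'' direction is where you genuinely diverge. The paper invokes R\'enyi's theorem: for any $\beta>1$ one has $1=\sum_{j\ge1}t_j\beta^{-j}$ with $t_j\in\{z\in\Z:0\le z<\beta\}$, and if $\beta\le M+1$ then these digits already lie in $\A$, so $0=-\tfrac1\beta+\sum t_j\beta^{-j-1}$ is the required non-trivial representation. You instead prove from scratch that $S=[L,R]$ via the self-covering $\beta[L,R]=\bigcup_{c\in\A}([L,R]+c)$ and a greedy iteration, then pick out $1\in[L,R]$. Your route is self-contained and yields the stronger intermediate fact that every point of $[L,R]$ is representable; the paper's route is shorter but imports a classical result. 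Both are clean.

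Two small remarks on presentation: the inequality $\#\A>\beta$ actually gives $R-L>1$ (strict), not just $\ge1$, though either suffices for the covering; and your appeal to ``the proof of Lemma~\ref{intervalI}'' is a bit loose, since that lemma treats a smaller interval with a fatter overlap --- the covering identity $\beta[L,R]=[L+m,R+M]=\bigcup_{c\in\A}([L,R]+c)$ you need here is the endpoint computation $\beta L=L+m$, $\beta R=R+M$, which you might state explicitly rather than cite.
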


\begin{proof}
Let $z =  \sum_{j=1}^{\infty} z_j \beta^{-j}$ with $z_j \in {\mathcal{A}}$ and $z_1\neq 0$.
\begin{itemize}
    \item If $z_1\geq 1$, then $z \geq \tfrac1{\beta} + m  \sum_{j=2}^{\infty}\beta^{-j} = \tfrac{\beta - 1 + m}{\beta(\beta - 1)}$.
    \item If $z_1\leq -1$, then $z \leq -\tfrac1{\beta} + M  \sum_{j=2}^{\infty}\beta^{-j} = \tfrac{-\beta + 1 + M}{\beta(\beta -1)}$.
\end{itemize}
Obviously, if $\beta > \max \{M + 1, -m + 1\}$, then $\inf\RR = \min\{\tfrac{\beta-1-M}{\beta(\beta -1)}, \tfrac{\beta-1+m}{\beta(\beta -1)}\}>0$, and zero has only the trivial $(\beta, {\mathcal{A}})$-representation.

For showing the opposite implication, we use a result of R\'enyi \cite{Renyi}. For any base $\beta >1$, the number $1$ can be written in the form $1 = \sum_{j=1}^\infty t_j \beta^{-j}$, where $t_j \in \{z \in \Z : 0 \leq z < \beta\}$. In particular, it means that
$$
0 = \frac{1}\beta - \sum_{j=1}^\infty t_j\beta^{-j-1} = -\frac{1}{\beta} + \sum_{j=1}^\infty t_j \beta^{-j-1} \, .
$$
If one of the sets $\{z \in \Z  :0 \leq z < \beta\}$ or $\{-z \in \Z : 0\leq z < \beta\}$ is a subset of ${\mathcal{A}}$, then $0$ has a non-trivial $(\beta, {\mathcal{A}})$- representation.
\end{proof}

\begin{example}\label{base4}
If $\beta = 4$ and ${\mathcal{A}} = \{-2, -1, 0, 1, 2\}$, then zero has only the trivial representation, and for $D_{\min}$ one can take $\tfrac{1}{12} = \min \RR = 0\decdot 1 \overline{2} \, \overline{2} \, \overline{2} \cdots$, where $\overline{d}$ stands for the signed digit $(-d)$.
\end{example}

\begin{remark}
If $\beta >1$ and the alphabet has the form ${\mathcal{A}} = \{0, 1, \ldots, M\}$, then zero has only a trivial representation. But this numeration system has another disadvantage: the operation of subtraction --- which is needed for evaluation of $W_k$ in both multiplication and division algorithms --- is not doable in parallel.
\end{remark}

\begin{example}\label{beta=2}
In the numeration system with $\beta=2$ and ${\mathcal{A}} = \{ -1,0,1 \}$, zero has two non-trivial representations, namely $0 = 0\decdot 1 \overline{1} \, \overline{1} \, \overline{1} \, \overline{1} \cdots = 0\decdot \overline{1} \, 1 \, 1 \, 1 \, 1 \cdots $. Therefore, the preprocessing is a bit more sophisticated than just shifting the fractional point. It is necessary to find a representation of the divisor such that $d_n d_{n+1} \neq 1 \overline{1}$ and $d_n d_{n+1} \neq \overline{1} 1$, where $n$ is the minimal index such that $d_n\neq 0$. This can be achieved by replacing the leading  pair of neighboring digits $1 \overline{1}$ with $01$ or by replacing $\overline{1} 1$ with $0 \overline{1}$, and this procedure is repeated for as long as necessary. Finally, the fractional point is shifted to the first non-zero digit. For example:
$$
0\decdot 1\overline{1}\,\overline{1}\,\overline{1}\,0\overline{1}1001 \mapsto  0\decdot 01\,\overline{1}\,\overline{1}\,0\overline{1}1001 \mapsto 0\decdot 001\,\overline{1}\,0\overline{1}1001 \mapsto 0\decdot 00010\overline{1}1001 \, ,
$$
and lastly, by shifting the fractional point, we get the preprocessed divisor $0\decdot 1 0 \overline{1} 1001$, which can enter as an input of the on-line division algorithm.

The parameter $D_{\min}$ of the Trivedi-Ercegovac algorithm for division can be set to $D_{\min} = \tfrac{1}{4}$ for this numeration system, since any divisor after the described preprocessing satisfies
$$
|D| = |0\decdot d_1 d_2 d_3 \cdots| \geq 0\decdot 1 0 \overline{1} \, \overline{1} \, \overline{1} \cdots = \tfrac12 - \tfrac18 - \tfrac1{16} - \cdots = \tfrac14 = D_{\min} \, .
$$
\end{example}

\begin{example}
In the numeration system with base $\beta =3$ and redundant alphabet ${\mathcal{A}} = \{-1, 0, 1, 2\}$, the number zero has a non-trivial representation $0 = 0\decdot \overline{1} 222 \, \cdots$. In this base, the situation is the same (i.e., a non-trivial $(\beta, {\mathcal{A}})$-representation of zero exists) with any redundant alphabet ${\mathcal{A}}$ containing at least one positive and one negative digit.
\noindent For the numeration system $(3, \{-1, 0, 1, 2\})$, we can set $D_{\min} = \tfrac{1}{9}$, and preprocess by replacing any leading pair of neighboring digits $\overline{1}2$ with $0\overline{1}$, analogously as explained for base $\beta = 2$ in Example~\ref{beta=2}.
\end{example}

We illustrate on two less trivial examples how to find $D_{\min}$ and how to perform preprocessing. In these two examples, the alphabet ${\mathcal{A}}$ consists of (possibly complex) units and zero, and,  moreover, ${\mathcal{A}}$  is closed under multiplication. In order to shorten our list of rules for preprocessing, let us adopt the following conventions:
\begin{enumerate}
    \item instead of the phrase ``If $w_1 w_2 \cdots w_k$ is a prefix of $\bf{d}$, replace this prefix with \\
    $u_1 u_2 \cdots u_k$", we write ``$w_1 w_2 \cdots w_k \longrightarrow u_1 u_2 \cdots u_k$";
    \item the rule ``$w_1 w_2 \cdots w_k \longrightarrow u_1 u_2 \cdots u_k$" is equivalent to the rule ``$w'_1 w'_2 \cdots w'_k \longrightarrow u'_1 u'_2 \cdots u'_k$" if there exists $a \in {\mathcal{A}}, a\neq 0$ such that $w_j = a w_j'$ and $u_j = a u_j'$ for all $j =1, 2, \ldots, k$.
\end{enumerate}
In our list of rules for preprocessing, we mention only one rule from  each class of equivalence. Clearly, each rule on the list preserves the value of the divisors, i.e., $\sum_{j=1}^k w_j \beta^{-j} = \sum_{j=1}^k u_j \beta^{-j}$, and sets $u_1 = 0$. In this convention, the list of preprocessing rules for base $\beta = 2$ and alphabet ${\mathcal{A}} = \{-1, 0, 1\}$ consists of one item only, namely the rule $1 \overline{1} \longrightarrow 01$.

\begin{example}
Let $\beta = \tfrac{1+\sqrt{5}}{2}$ and ${\mathcal{A}} = \{-1, 0, 1\}$. Since $\beta^2 - \beta - 1 = 0$, zero  has the representation $0 = 0\decdot 1 \, \overline{1} \, \overline{1}$.
\noindent We use three preprocessing rules: \quad $ 1) \quad 10\overline{1}\longrightarrow 010$, \quad 2) \quad $1\overline{1}0\longrightarrow001$, \quad 3) \quad  $1\overline{1}\overline{1}\longrightarrow000$.

Let $D = 0\decdot d_1 d_2 d_3 \cdots$. If $d_1 \neq 0$ and none of the rules 1) -- 3) can be applied to the string ${\bf d} = d_1 d_2 d_3 \cdots$, then $|D| \geq  D_{\min} = \tfrac{1}{\beta^5}$. This can be shown by the following analysis, wherein we can assume $d_1 = 1$, without loss of generality:
\begin{itemize}
    \item Let $d_2 = 1$. Then $D \geq \tfrac{1}{\beta} + \tfrac{1}{\beta^2} - \sum_{j= 3}^\infty\beta^{-j} = 1 - \tfrac{1}{\beta} = \tfrac{1}{\beta^2}\geq D_{\min}\,.$
    \item Let $d_2 = 0$. Since the rule 1) cannot be applied, $d_3 \geq 0$, and $D \geq \tfrac{1}{\beta} - \sum_{j= 4}^\infty\beta^{-j} = \tfrac{1}{\beta} - \tfrac{1}{\beta^2} = \tfrac{1}{\beta^3}\geq D_{\min}\,.$
    \item Let $d_2 = \overline{1}$. As the rules 2), 3) cannot be applied, $d_3 = 1$. Thus $D \geq \tfrac{1}{\beta} - \tfrac{1}{\beta^2}  +\tfrac{1}{\beta^3} - \sum_{j= 4}^\infty\beta^{-j} = \tfrac{1}{\beta^5} = D_{\min} \, .$
\end{itemize}
\end{example}

\begin{example}\label{EisenPrepro}
Let $\beta = -1 + \omega \,$, where $\omega = \exp{\frac{2i\pi}{3}}$ is the third root of unity, i.e., $\omega^3 = 1$. We consider the alphabet ${\mathcal{A}}$ of size $\#{\mathcal{A}} = 7$, namely ${\mathcal{A}} = \{0, \pm 1, \pm \omega, \pm \omega^2 \}$. Section~\ref{Eisen} is devoted to this numeration system in detail; here we just mention that the elements of the ring $\Z[\omega]$ are called Eisenstein integers.

Firstly, we show that
\begin{equation}\label{Dmax}
    D_{\max} = \max \{| 0 \decdot d_1 d_2 d_3 \cdots | \ :  d_j \in {\mathcal{A}}\} = \tfrac12 \sqrt{7}\,.
\end{equation}
Since $|x\beta + y| \leq |\beta -1| = \sqrt{7}$ for any $x, y \in{\mathcal{A}}$, we have $$|0\decdot d_1 d_2 d_3 \cdots| \leq \sqrt{7}\sum\limits_{j= 1}^{\infty}|\beta|^{-2j}= \tfrac12 \sqrt{7}.$$
As $\beta^2 = a|\beta^2|$, where $a =-\omega \in {\mathcal{A}}$, the upper bound $\tfrac12 \sqrt{7}$ is attained.

Let us list 9 equivalence classes of the rules that we apply in the (divisor) preprocessing:
\begin{itemize}
    \item Using $\beta + (1-\omega) = 0$, we get the rules \\
        $\quad A) \ \ 1 \, 1 \longrightarrow 0\, \omega$,
        $\qquad \ B) \ \ 1 \, \overline{\omega} \longrightarrow 0\, \overline{1}$.
    \item From $\beta^{2} + \beta + (\omega-\omega^2)= 0$, we obtain \\
        \ \quad C) \ \ $1 \, 0 \, \omega \longrightarrow 0\, \omega \,1$, \qquad \ \, D) \ \ $1 \, 0 \, \overline{\omega^2} \longrightarrow 0\, \overline{1}\, \overline{\omega}$, \\ \qquad E) \ \ $1 \, \overline{\omega^2} \, \omega \longrightarrow 0\, \omega\, \omega^2$, \quad \quad F) \ \ $1 \, \overline{\omega^2} \, \overline{\omega^2} \longrightarrow  0\, \omega\, \overline{\omega}$.
    \item Using $\beta^{2} - \omega \beta + (\omega - 1) = 0$, we get the rules\\
        \quad G) \ \ $1 \, 0 \, \overline{1}\longrightarrow 0\, \omega\, \overline{\omega}$, \ \quad \ H) \ \ $1 \, \omega^2 \, \overline{1} \longrightarrow 0\, \overline{1} \, \overline{\omega}$, \\\qquad  I) \ \ $1 \, \omega^2 \, \omega \longrightarrow 0\, \overline{1}\, 1$.
\end{itemize}
If $D = 0\decdot d_1 d_2 d_3 \cdots$ with $d_1 \neq 0$, and none of the rules A) -- I) can be applied to the string ${\bf d} = d_1 d_2 d_3 \cdots$, then
\begin{equation}\label{Eisenstein_D_min}
    |D| \geq D_{\min} = \frac{\sqrt{3} (6-\sqrt{7})}{18} \, .
\end{equation}
Without loss of generality, we can assume $d_1 = 1$. By exploring all possible triplets $1 d_2 d_3$ to which no rules can be applied, we see that $|0\decdot  d_1d_2d_3| \geq \tfrac{\sqrt{3}}{3}$. Therefore, $|D| \geq \tfrac{\sqrt{3}}{3} - \tfrac{1}{|\beta|^3}D_{\max}$, which, together with~\eqref{Dmax}, proves~\eqref{Eisenstein_D_min}.
\end{example}

\noindent

Let us conclude this section by three remarks concerning the optimality of the parameters occurring in the on-line algorithms.

\begin{remark}
Note that the preprocessing methods and results given in examples above may not be optimal, in the sense that the values $D_{\min}$ may not be the maximal possible. Some of them could be further increased, by performing more laborious preprocessing,  especially by deploying larger sets of rewriting rules.   In general, the bigger  the value $D_{\min}$ is the smaller the delay $\delta$ can be used in the on-line division algorithm.
\end{remark}

\noindent
\begin{remark}  To show the correctness of the Trivedi-Ercegovac algorithm we did not need  the inequality $|D_{k}| \geq D_{min}$ to be valid    for all $k \geq 1$. In the division algorithm,   the  select function  is applied only to  the value  $D_k$ for $ k=\delta +1, \delta +2, \ldots $ , see  \eqref{eq:W_k_div}.  For such indices $k$ we also  required  $|\Trunc_\alpha(D_k)| \geq D_{\min}$, see
\eqref{trunc_alpha_D}.  The function  $\Trunc_\alpha$  uses only $L$ fractional digits of a string representing its argument.  Such $L$    depends on $\alpha$  and $A =\max\{|a| : a \in \mathcal{A}\}$, and  must be chosen to   satisfy  $ \sum_{k\geq L+1}A |\beta|^{-k}  <\alpha$ .   Clearly,   $|\Trunc_\alpha(D_k)|= |D_L|$,   for all  $k\geq L$.

Let us summarize: The correctness of the Trivedi-Ercegovac division algorithm requires $|D_L| \geq D_{\min}$  and   $|D_{k}| \geq D_{\min}$  for all $  k\geq \delta +1 $.
\end{remark}


\begin{remark}\label{OL_2in1}
Definition~\ref{DefOL} of the OL Property covers  in fact two purposes:
\begin{enumerate}
    \item boundedness of the sequences $(W_k)$, so that on-line multiplication and division algorithms converge;
    \item  sufficiency of  using only truncated representations of $W_k$ and $D_{k + \delta}$, which is necessary for a
cheap evaluation  of the $\Select$ functions (the question of complexity of the algorithms is discussed in the next section).
\end{enumerate}
\noindent
Looking into  the proofs  of correctness of the algorithms one can see that  these two purposes are reflected in the OL Property definition by:
\begin{enumerate}
    \item covering  of the $\varepsilon$-fattening of the set $(\beta I)$ by the union of sets $\bigcup_{a \in {\mathcal{A}}} (I + a)$;
\item each point $x$ of $(\beta I)^{\varepsilon}$  sits  inside a  set $I+a$  and  deep behind its border, or more precisely  the distance between  $x$  and the border   of  $I+a$ is at least $\varepsilon$.
 \end{enumerate}
To avoid  very   technical formulation of  \eqref{OL} we decided to use the same parameter $\varepsilon$ to take into account both phenomena.

For a finer calculation of parameters $\delta$ and $L$,  it may be useful to parameterize these two aspects separately. It means  to use one parameter $\mu > 0$ for a  fattening of the set $(\beta I)$  and another parameter $\nu> 0$ for watching the distance to the border  of $I+a$.
This approach was used for the  Eisenstein numeration system, see Section~\ref{Eisen}.
\end{remark}

\section{Time complexity of  the  Trivedi-Ercegovac  algorithms}

The time complexity of an algorithm is usually defined as the number of elementary operations  needed to get a result for any input of length  $n$. In our multiplication and division algorithms, strings representing
input numbers can be infinite. Therefore, by time complexity $T(n)$ we understand the number of elementary   operations needed to get $n$ digits of the result on the output of the algorithms.  The time complexity of both  algorithms depends on the number of steps needed to compute the auxiliary value $W_k$ and  the $k$-th output digit  by the relevant $\Select$ function. If both tasks can be performed in constant time, then the time complexity of computing the first $n$ most significant digits of the result is $\mathcal{O}(n)$.

\subsection{Evaluation of $W_k$}
According to Formulas \eqref{eq:N6} and \eqref{W_k_definition}, the values of $W_k$   can be calculated in constant time if addition and subtraction and also multiplication by a digit from ${\mathcal{A}}$ can be performed in parallel in $(\beta, {\mathcal{A}})$. It is possible only in a redundant numeration system $(\beta, {\mathcal{A}})$.

Already  the OL Property  forces the system to be redundant. For real bases, redundancy implies $\#{\mathcal{A}} > |\beta|$, and Lemma \ref{intervalI} states that $\#{\mathcal{A}} > |\beta|$ is also a sufficient condition for the OL Property in case of an alphabet of contiguous integers. Nevertheless, $\#{\mathcal{A}} > |\beta|$ does not guarantee that addition and subtraction in $(\beta, {\mathcal{A}})$ are doable in constant time in parallel. Usually, the alphabet has to be extended further on. For example, both systems $(\tfrac{1+\sqrt{5}}{2}, \{0,1\})$ and $(\tfrac{1+\sqrt{5}}{2}, \{\overline{1},0,1\})$ have the OL Property, but parallel addition and subtraction is possible only in the second one. The question of sufficient redundancy for parallel addition is treated in general in~\cite{FrPeSv2}.

\subsection{Evaluation of the Select function}
To evaluate the $\Select$  functions in constant time, their  output values  $p_k$ and $q_k$  must depend only on a bounded number of digits in the strings representing the variables $W_k$  and $D_k$.
 If it is the case, then evaluation of the  $\Select$  function is performed  by using finite look-up tables, as  on a finite alphabet there exist only finitely many strings  of bounded length.
Let us concentrate on this case and consider strings representing  $W_k$  and $D_k$.

From the right side --- i.e., behind the fractional point --- the number of fractional digits of $W_k$ and $D_k$ is limited to $L \in \N$ by truncation (see Definition \ref{truncation}) of the less significant digits in the $(\beta, A)$-representation of $W_k$ and $D_k$:
\begin{itemize}
    \item for multiplication: $\Trunc_{\varepsilon/2}(W_k) = \sum_{j=-n}^{L} w_j \beta^{-j}$, so that $|\sum_{j=L+1}^\infty w_j \beta^{-j}| < \varepsilon/2$ with $\varepsilon$ from the OL Property;
    \item for division: $\Trunc_{\alpha}(W_k) = \sum_{j=n}^{L} w_j \beta^{-j}$ and $\Trunc_{\alpha}(D_k) = \sum_{j=1}^{L} d_j \beta^{-j}$, so that $|\sum_{j=L+1}^\infty w_j \beta^{-j}| < \alpha$ and $|\sum_{j=L+1}^\infty d_j \beta^{-j}| < \alpha$ with $\alpha$ defined in \eqref{ALFA}.
\end{itemize}
The parameter $L$ is found simply by solving the inequalities (separately for multiplication or for division): let $A = \max\{ |a| \,:\, a \in {\mathcal{A}}\}$
\begin{equation}\label{L_calculation}
    \tfrac{A}{|\beta|^{L+1}} + \tfrac{A}{|\beta|^{L+2}} + \cdots = \tfrac{A}{|\beta|^L(|\beta|-1)} < \tfrac{\varepsilon}{2} \quad \text{ or } \quad \tfrac{A}{|\beta|^L(|\beta|-1)} < \alpha \, .
\end{equation}

To  limit also the number of digits before the fractional point  of $W_k$  we use  the fact  that  $W_k$  belongs to a bounded area, say  $\mathcal{J}$.   For $L \in \N$ and a bounded set $\mathcal{J}$, let us consider the following set of strings over ${\mathcal{A}}$:
\begin{equation}\label{S_definition}
    \S_{L,\mathcal{J}} = \Bigl\{ x_{-n} x_{-n+1} \cdots x_{-1} x_0 x_{1} x_{2} \cdots x_{L} \, : \, x_{-n} \neq 0 \, , x_j \in {\mathcal{A}} \ \text{ and } \ \sum_{j=-n}^L x_j \beta^{-j} \in \mathcal{J} \Bigr\} \, .
\end{equation}

\begin{lemma}\label{finitBeforeFP}
Let $(\beta, {\mathcal{A}})$ be a   numeration system,  $\mathcal{J}$  a bounded set and $L \in \N$. If zero has only the trivial $(\beta, {\mathcal{A}})$-representation, then the set $\S_{L,\mathcal{J}}$ from \eqref{S_definition} is finite.
\end{lemma}

\begin{proof}
Assume that $\S_{L, \mathcal{J}}$ is infinite. Since ${\mathcal{A}}$ is finite, there exist a strictly increasing sequence of  positive integers $k_n$ and a sequence $x^{(n)}$ of numbers $x^{(n)} = \sum _{j=-k_n}^{L} x^{(n)}_j \beta^{-j}$ such that $x^{(n)}_{-k_n} \neq 0$ and $x^{(n)} \in \mathcal{J}$. It implies that $|x^{(n)} \beta^{-k_n-1}| \in \RR$, as defined in \eqref{defR}. Since $\mathcal{J}$ is bounded, say by a constant $b$ (in modulus), we have $|x^{(n)} \beta^{-k_n-1}| \leq b | \beta^{-k_n-1}| \to 0$, and thus the infimum of ${\mathcal R}$ is zero --- a contradiction with Lemma \ref{R}.
\end{proof}

   If zero has only the trivial $(\beta, {\mathcal{A}})$-representation,   we create a look-up table
for multiplication by considering   $\mathcal{J} = (\beta I)^{\varepsilon/2}$.  For division, we  consider all possible truncated divisors  $D = 0\decdot d_1 \cdots d_L$, with $d_1\neq 0$  and create  a look-up table  for the  bounded set  $\mathcal{J}_D = D (\beta I)^{\varepsilon/2}$.

Even if  the set $\S_{L,\mathcal{J}}$ is not finite,  the situation may be not hopeless.  If the numeration system allows  preprocessing of divisor, one can use its  rewriting  rules   (without shifting the fractional point) and  modify
 the representation of $W_k$ after each iterative step of the algorithm.  So we prevent to have  a  representation of $W_k$ with excessive index of the leading coefficient and thus only finitely many strings represent  all possible values occurring in the truncated sequence  $(W_k)$.

\begin{lemma}
Let us assume that for a numeration system $(\beta, {\mathcal{A}})$ there exists a finite list of rules and $D_{\min} > 0$ such that any $D = 0\decdot  d_1 d_2 d_3 \cdots$ with $d_1 \neq 0$ on which no rule of the list can be applied has modulus $|D|\geq D_{\min}$. Then the set $\S'_{L,\mathcal{J}} = \{s \in \S_{L,\mathcal{J}} : \text {no rules can be applied to the string} \ s \}$ is finite, for a given integer $L \in \N$ and a given bounded set $\mathcal{J}$.
\end{lemma}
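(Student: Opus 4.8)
The plan is to follow the structure of the proof of Lemma~\ref{finitBeforeFP}, replacing the use of the trivial-representation hypothesis (via $\RR$ and Lemma~\ref{R}) by the quantitative lower bound $D_{\min}$ supplied by the rewriting rules. Since the alphabet $\A$ is finite and, by the definition of $\S_{L,\mathcal{J}}$, every string in $\S'_{L,\mathcal{J}}$ has exactly $L$ digits behind the fractional point, it suffices to bound the number of integer-part digits, i.e.\ the leading index $n$, uniformly over all $s \in \S'_{L,\mathcal{J}}$; once the total length is bounded, finiteness over a finite alphabet is immediate. So I would first fix a constant $b$ with $|x| \le b$ for all $x \in \mathcal{J}$, and let $\ell$ denote the maximal length of the left-hand sides of the finitely many rewriting rules.

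Then, for an arbitrary $s = x_{-n} \cdots x_L \in \S'_{L,\mathcal{J}}$, I would form the number $X = \sum_{j=-n}^{L} x_j \beta^{-j} \in \mathcal{J}$ and rescale it to $D = \beta^{-n-1} X$. The key observation is that $D$ has the finite $(\beta,\A)$-representation $0 \decdot x_{-n} x_{-n+1} \cdots x_L$, whose leading fractional digit $x_{-n}$ is nonzero and whose leading prefix coincides digit-for-digit with the most significant part of $s$. Hence, provided $n + L + 1 \ge \ell$, no rule of the list can be applied to $D$ either, so the hypothesis of the lemma forces $|D| \ge D_{\min}$. On the other hand $|D| = |X|\,|\beta|^{-n-1} \le b\,|\beta|^{-n-1}$, whence $|\beta|^{n+1} \le b/D_{\min}$ and therefore $n \le \log_{|\beta|}(b/D_{\min}) - 1$, a bound depending only on $\beta$, $b$ and $D_{\min}$.

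Finally I would dispose of the remaining strings, those with $n + L + 1 < \ell$: they have length bounded by $\ell$, and over the finite alphabet $\A$ there are only finitely many of them. Combining the two cases shows that every element of $\S'_{L,\mathcal{J}}$ has leading index bounded by a constant, so $\S'_{L,\mathcal{J}}$ is finite. I expect the only genuinely delicate step to be transporting the condition ``no rule applies to $s$'' to the rescaled number $D$: one must ensure that viewing $D$ as the padded string $0 \decdot x_{-n} \cdots x_L\, 0\, 0 \cdots$ does not complete the left-hand side of some rule within its first $\ell$ positions. This is precisely why the hypothesis is invoked only once the string is at least as long as the longest rule, the finitely many shorter strings being handled separately.
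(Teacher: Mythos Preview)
Your proof is correct and follows essentially the same approach as the paper's own argument, which simply says the proof is ``analogous to the proof of Lemma~\ref{finitBeforeFP}, with only $\inf \RR$ replaced by $D_{\min}$.'' You present it as a direct bound on the leading index~$n$, whereas the paper (implicitly) argues by contradiction with $k_n \to \infty$; the rescaling $D = \beta^{-n-1}X$ and the comparison $|D| \le b\,|\beta|^{-n-1}$ against $|D| \ge D_{\min}$ are identical in both. Your explicit handling of the short-string case $n+L+1 < \ell$ is a point the paper glosses over, but in the contradiction version it is harmless because the strictly increasing leading indices eventually exceed~$\ell$; so your extra care is justified but not strictly needed in the paper's framing.
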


\begin{proof}
It is analogous to the proof of Lemma \ref{finitBeforeFP}, with only $\inf {\mathcal R}$ replaced by $D_{\min}$.
\end{proof}

We can summarize the findings from this section into the following statement:
\begin{theorem}
If a numeration system $(\beta, {\mathcal{A}})$ with the OL Property allows parallel addition, parallel subtraction, and preprocessing of divisors into the form \eqref{Denominator}, then the time complexity of the Trivedi-Ercegovac algorithms for on-line multiplication and division is $\mathcal{O}(n)$.
\end{theorem}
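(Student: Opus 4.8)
The plan is to show that each output digit is produced in constant time, so that generating the first $n$ digits costs $\mathcal{O}(n)$ operations in total. Since both the multiplication recurrence \eqref{eq:N6} and the division recurrence \eqref{W_k_definition} advance by exactly one output digit per iteration, there are $\mathcal{O}(n)$ iterations to obtain $n$ digits, and it suffices to bound the cost of a single iteration by a constant independent of $k$ and $n$. Each iteration splits into two tasks: the evaluation of the auxiliary value $W_k$, and the evaluation of the selection function ($\SelectM$ or $\SelectD$).

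For the first task, I would invoke the discussion on evaluation of $W_k$: the recurrences are built from additions, subtractions, and multiplications by a single digit of $\A$. By hypothesis, parallel addition and subtraction are available in $(\beta, \A)$, and multiplication by a fixed digit from the finite alphabet reduces to a bounded number of such parallel operations; hence $W_k$ is computed from $W_{k-1}$ in constant time.

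For the second task, the goal is to realize each $\Select$ function as a finite look-up table, which can then be consulted in constant time. First I would bound the number of fractional digits that matter: the truncations $\Trunc_{\varepsilon/2}$ (for multiplication) and $\Trunc_\alpha$ (for division) retain only $L$ fractional digits, where $L$ is determined by \eqref{L_calculation}. Next I would bound the number of integer digits. The OL Property forces $W_k$ to lie in the bounded set $(\beta I)^{\varepsilon/2}$, and $D_{k+\delta}$ is likewise bounded, so the values fed to $\Select$ range over a bounded region $\mathcal{J}$. The remaining difficulty—and the crux of the whole argument—is that in a redundant system a bounded value may still admit representations with arbitrarily many digits to the left of the fractional point, so the naive set of admissible strings $\S_{L,\mathcal{J}}$ from \eqref{S_definition} need not be finite.

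This is exactly where the preprocessing hypothesis enters. After each iterative step I would apply the finite list of rewriting rules (without shifting the fractional point) to the current representation of $W_k$, keeping it in the canonical form on which no rule applies. By the lemma immediately preceding this theorem, the set $\S'_{L,\mathcal{J}}$ of such canonical strings is finite: a string in canonical form with leading digit nonzero has modulus at least $D_{\min} > 0$, which prevents the leading index from growing without bound. Consequently only finitely many strings ever represent the truncated values of $W_k$ and $D_{k+\delta}$, so each $\Select$ function is a lookup over a finite table and is evaluated in constant time. Combining the two constant-time tasks over $\mathcal{O}(n)$ iterations yields the claimed time complexity $\mathcal{O}(n)$.
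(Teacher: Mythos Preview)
Your proposal is correct and follows essentially the same approach as the paper: split each iteration into the constant-time evaluation of $W_k$ (via parallel addition/subtraction) and the constant-time evaluation of $\Select$ (via truncation to $L$ fractional digits and finiteness of the canonical string set $\S'_{L,\mathcal{J}}$ guaranteed by the preprocessing rules). In fact the paper does not give a separate formal proof of this theorem at all---it is stated as a summary of the surrounding discussion---so your write-up is, if anything, a more explicit rendering of that same argument.
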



\section{Examples}\label{examples}


\subsection{Base $\beta = \frac{3+\sqrt{5}}{2}$ and alphabet ${\mathcal{A}} =\{-1,0,1\}$ }\label{goldenSquar}

For illustrating how the on-line algorithms for multiplication and division work, we consider
a well-studied numeration system, with base $\beta = \frac{3+\sqrt{5}}{2}$ and alphabet ${\mathcal{A}} =\{-1,0,1\}$. Let us list the most important properties of this system:
\begin{itemize}
    \item The base $\beta = \frac{3+\sqrt{5}}{2}$ is a quadratic Pisot unit with minimal polynomial $f(t) = t^2 -3t+1$. In fact, $\beta$ is the square of the golden mean $\frac{1+\sqrt{5}}{2}$.
    \item The numeration system with base $\beta = \frac{3+\sqrt{5}}{2}$ and alphabet ${\mathcal{A}} =\{-1,0,1\}$ allows parallel addition \cite{FrPeSv2}.
    \item By Lemma \ref{ReprezZero}, zero has only a trivial $(\beta, {\mathcal{A}})$-representation, and
            \begin{equation}\label{sign}
                D_{\min} = 0\decdot\, 1 \, \overline{1} \, \overline{1}\,\overline{1}\,\cdots = \frac{1}{\beta} - \sum_{j=2}^\infty \frac{1}{\beta^j} = \frac{1}{\beta^2} \,.
            \end{equation}
            It means that the sign of the first non-zero digit in a representation decides about the sign of the represented number. Moreover, preprocessing of divisor consists just in shifting the fractional point.
    \item If $D =0\decdot d_1d_2d_3\cdots$ is a $(\beta, {\mathcal{A}})$-representation of the number $D$, then
            \begin{equation}\label{estimate}
                -\frac{1}{\beta-1} \le D \le \frac{1}{\beta-1}=D_{\max}.
            \end{equation}
    \item By Lemma \ref {intervalI}, the numeration system has the OL Property with
            \begin{equation}\label{eps_tau^2}
                \varepsilon = \frac{1}{2\beta(\beta+1)} > 0 \, \quad \text{and} \quad  I=[-\rho, \rho],  \quad \text{where}\ \rho = \tfrac12 + \varepsilon = \frac{2}{\beta+1}\,.
            \end{equation}
    \item The $\Digit$ function from Lemma \ref{digit} is $\Digit: [-\beta\rho -\varepsilon, \beta\rho + \varepsilon] \to \{-1, 0, 1\}$ defined by
            \begin{equation}\label{alter}
                \Digit(V) = \left\{\begin{array}{lll}
                    \ \ 1 & & \text{if \ \ } V > \tfrac12 \, ,\\
                    -1 & & \text{if \ \ } V < -\tfrac12 \, ,\\
                    \ \ 0 & & \text{otherwise.}
                \end{array} \right.
            \end{equation}
\end{itemize}


\subsubsection{On-line multiplication in base $\beta = \frac{3+\sqrt{5}}{2}$ and alphabet ${\mathcal{A}} = \{-1, 0, 1\}$}\label{specificBeta}

For on-line multiplication, the delay $\delta$ according to \eqref{deltatMult} has to satisfy $\tfrac{2}{\beta^\delta(\beta -1)}< \tfrac{1}{4\beta(\beta +1)}$, and the smallest such delay is $\delta = 4$.

It remains to find an easy way how to evaluate the function $\SelectM(W) = \Digit(\Trunc_{\varepsilon/2}(W))$. By Definition~\ref{SelectMult}, its domain is $(\beta I)^{\varepsilon/2} = [-\beta \rho - \tfrac12 \varepsilon, \beta \rho + \tfrac12 \varepsilon]$.\\

\noindent{\bf Claim 1}: If $Z = z_{-m} z_{-m+1} \cdots z_{-1} z_0 \decdot z_{1} z_{2} \cdots \in (\beta I)^{\varepsilon/2}$ with $z_{-m} \neq 0$, then $-m\leq 1$. Without loss of generality, consider $z_{-m} =1$. For contradiction, suppose that $-m \geq 2$. Then
$$
Z \geq \beta^{-m} - \sum_{j=-m+1}^{+\infty} \beta^{-j} = \beta^{-m} \Bigl(1 - \tfrac{1}{\beta-1}\Bigr) \geq \beta^2 \Bigl(1 - \tfrac{1}{\beta-1}\Bigr) = \beta > \beta \rho + \tfrac12 \varepsilon,
$$
i.e., $Z \notin (\beta I)^{\varepsilon/2}$ --- a contradiction.\\

\noindent{\bf Claim 2}: Let $Z = z_{-1} z_0 \decdot z_{1} z_{2} \cdots$ and $V = z_{-1} z_0 \decdot z_{1} z_{2} z_{3} z_{4}$. Then $|Z - V| < \tfrac12 \varepsilon$, with $\varepsilon$ defined in \eqref{eps_tau^2}.\\
Indeed, $|Z-V| \leq \sum_{j\geq 5}\beta^{-j} = \tfrac{1}{\beta^4}\tfrac{1}{\beta-1} < \tfrac12 \varepsilon \, .$\\

\noindent{\bf Claim 3}: Let $V = z_{-1} z_0 \decdot z_{1} z_{2} z_{3} z_{4}$. Then $V > \tfrac12$ if and only if
$$
z_{-1} z_0 z_{1} z_{2} z_{3} \succ 0 \, 1 \, \overline{1} \, \overline{1} \, 0 \quad \text{ or} \quad \bigl(z_{-1} z_0 z_{1} z_{2} = 0 \, 0 \, 1 \, 1 \quad \text{and} \quad z_{3} \neq \overline{1}\bigr)
$$
where $\succ$ denotes the lexicographic order on words over the alphabet ${\mathcal{A}} = \{-1, 0, 1\}$.

This can be proved by inspection of all possibilities, and using the symmetry of the alphabet and \eqref{sign} as follows:
\begin{itemize}
    \item If $z_{-1} = 1$, then $V > \beta - \sum_{j= 0}^\infty \beta^{-j} = 1 > \tfrac12$.
    \item If $z_{-1} = 0, z_0 = 1, z_{1} \geq 0$, then $V \geq 1- \sum_{j= 2}^\infty \beta^{-j} = \tfrac{2}{\beta} > \tfrac12$.
    \item If $z_{-1} = 0, z_0 = 1, z_{1} = \overline{1}, z_{2} \geq 0$, then $V \geq 1 - \tfrac{1}{\beta} - \sum_{j= 3}^\infty \beta^{-j} = \tfrac{1}{\beta} + \tfrac{1}{\beta^2} > \tfrac12$.
    \item If $z_{-1} = 0, z_0 = 1, z_{1} = \overline{1}, z_{2} = \overline{1}, z_{3} = 1$, then $V \geq 1 - \tfrac{1}{\beta} - \tfrac{1}{\beta^2} + \tfrac{1}{\beta^3} - \tfrac{1}{\beta^4} > \tfrac12$.
    \item If $z_{-1} = 0, z_0 = 1, z_{1} = \overline{1}, z_{2} = \overline{1}, z_{3} \leq 0$, then $V \leq 1- \tfrac{1}{\beta} - \tfrac{1}{\beta^2} + \tfrac{1}{\beta^4} < \tfrac12$.
    \item If $z_{-1} = 0$ and $z_0 = 0$, we have to perform a more detailed  calculation:
        \begin{itemize}
             \item as $0\decdot 110\overline{1} > \tfrac12$, the value $0\decdot 11 z_{3} z_{4} > \tfrac12$ for any $z_{3} \geq 0$ and $z_{4} \in {\mathcal{A}}$;
             \item as $0\decdot 11\overline{1}1 < \tfrac12$, the value $0\decdot 11 \overline{1} z_{4} < \tfrac12$ for any $z_{4} \in {\mathcal{A}}$;
             \item as $0\decdot 1011 < \tfrac12$, the value $0\decdot 10 z_{3} z_{4} < \tfrac12$ for any $z_{3},z_{4} \in {\mathcal{A}}$;
             \item as $0\decdot 0111 < \tfrac12$, the value $0\decdot 0 z_{2} z_{3} z_{4} < \tfrac12$ for any $z_{2}, z_{3}, z_{4} \in {\mathcal{A}}$.
        \end{itemize}
    \item If $z_{-1} = 0$ and $z_0 = \overline{1}$, then $V \leq -1+ \sum_{j= 1}^\infty \beta^{-j}= -1 + \tfrac{1}{\beta - 1} < \tfrac12 \, .$
\end{itemize}

Lemma \ref{finitBeforeFP} guarantees that the evaluation of the function $\SelectM$ can be done via a finite table of values. Previous Claims 1--3 imply that such table has $3^5$ elements (i.e., 3 possible digits from ${\mathcal{A}}$ on 5 positions $z_{-1}, \ldots, z_3$). But Claim 3 and the lexicographic order enable us to provide also a more effective way of evaluation of the function $\SelectM$.

Let $Z = z_{-m} z_{-m+1} \cdots z_{-1} z_0 \decdot z_{1} z_{2} \cdots$  be a $(\beta, {\mathcal{A}})$-representation of a number $Z$ in base $\beta = \frac{3+\sqrt{5}}{2}$ and alphabet ${\mathcal{A}} = \{-1, 0, 1\}$. We define
\begin{equation}
    \SelectM(Z) = \left\{\begin{array}{ll}
        \ \ 1 &  \text{if}  \; z_{-1} z_0 z_{1} z_{2} z_{3} \succ 0 \,1\,\overline{1}\, \overline{1} \,0\ \; \text{or}  \; \bigl(z_{-1} z_0 z_{1} z_{2} = 0\,0\,1\,1 \; \text{and} \; z_{3} \neq \overline{1}\bigr) ,\\
        -1 & \text{if}  \; z_{-1} z_0 z_{1} z_{2} z_{3} \prec 0 \,\overline{1}\,{1}\, {1} \,0\ \;  \text{or} \;  \bigl(z_{-1} z_0 z_{1} z_{2} = 0\,0\,\overline{1}\,\overline{1} \; \text{and} \; z_{3} \neq 1\bigr) ,\\
        \ \ 0 & \text{otherwise}.
    \end{array} \right.
\end{equation}

In base $\beta = \frac{3+\sqrt{5}}{2}$ with alphabet ${\mathcal{A}} = \{-1,0,1\}$, on-line multiplication is possible by the Trivedi-Ercegovac algorithm with delay $\delta = 4$, and with linear time complexity. The number of digits we need to evaluate for $W$ within the algorithm is $L = 3$ behind the fractional point, and another two digits before the fractional point.


\subsubsection{On-line division in base $\beta = \frac{3 + \sqrt{5}}{2}$ and alphabet ${\mathcal{A}} = \{-1, 0, 1\}$}\label{specificBeta}

To determine the algorithm for on-line division in base $\beta = \frac{3 + \sqrt{5}}{2}$, we have to specify two parameters: $\delta$ and $D_{\min}$. We put   $D_{\min} = \frac{1}{\beta^2}$ (cf. \eqref{sign}). To find the delay $\delta$, we may again follow the general formula \eqref{DeltaDivision}, and obtain $\delta = 7$. By a more elaborated calculation, specific for this numeration system, the delay can be further optimized, namely to $\delta = 6$, in combination with the number $L=9$ of fractional digits to evaluate in the representations of $W$ and $D$.

In the sequel, we show that the delay can be set to $\delta = 6$. Since we work with a symmetric alphabet, we assume in the whole section that the denominator is positive, i.e., its first digit $d_1 =1$.

We start with two auxiliary claims, using $(\beta I)^{\varepsilon/2} = [-\beta \rho - \tfrac12 \varepsilon, \beta \rho + \tfrac12 \varepsilon]$.\\

\noindent{\bf Claim 1}: Let $Z = z_{-m} z_{-m+1} \cdots z_{-1} z_0 \decdot z_{1} z_{2} \cdots \in D (\beta I)^{\varepsilon/2}$ with $z_{-m} \neq 0$ and $D_{\min} = \tfrac{1}{\beta^2} < D < \tfrac{1}{\beta-1} = D_{\max} $. Then $m \leq 0$.

\noindent Without loss of generality, consider $z_{-m} =1$. For contradiction, suppose that $m \geq 1$. Then
$$
Z \geq \beta^m - \sum_{j=-\infty}^m \beta^j = \beta^m \Bigl(1-\tfrac{1}{\beta-1}\Bigr) \geq \beta\Bigl(1-\tfrac{1}{\beta-1}\Bigr) = 1 > \tfrac{1}{\beta-1} \Bigl(\beta \rho + \tfrac12 \varepsilon\Bigr) = D_{\max} \Bigl(\beta \rho + \tfrac12 \varepsilon\Bigr) \, ,
$$
i.e., $Z \notin D(\beta I)^{\varepsilon/2} $ --- a contradiction.\\

\noindent{\bf Claim 2}: Let $U = u_0 \decdot u_{1} u_{2} \cdots$ and $D = 0 \decdot d_1 d_2 \cdots$. Denote $\Delta = 0 \decdot d_1 d_2 \cdots d_9$ and $V = u_0 \decdot u_{1} u_{2} \cdots u_{9}$. Then
$$
\Bigl|\tfrac{U}{D}-\tfrac{V}{\Delta}\Bigr| < \tfrac{\varepsilon}{2}\,.
$$
Indeed, find $\alpha_1$ and $\alpha_2$ such that $U = V + \alpha_1$ and $D = \Delta + \alpha_2$. The moduli of $\alpha_1$ and $\alpha_2$ are bounded by $\sum_{j>9}\beta^{-j} = \tfrac{1}{\beta^9(\beta-1)}$. Using \eqref{sign} and \eqref{estimate}, we get
$$
\Bigl|\tfrac{U}{D}-\tfrac{V}{\Delta}\Bigr| = \tfrac{1}{D\Delta}\bigl| \alpha_1\Delta - \alpha_2V\bigr| \leq  \tfrac{1}{D_{\min}^2} \tfrac{1}{\beta^9(\beta-1)} (1+\beta) D_{\max}= \tfrac{1+\beta}{\beta^5(\beta-1)^2}< \tfrac12\varepsilon \, .
$$

By combining the previous Claims 1--2 and the form of the $\Digit$ function given by \eqref{alter}, we define the $\SelectD$ function for on-line division.

Let $U = u_{-n} u_{-n+1} \cdots u_{-1} u_0 \decdot u_{1} u_{2} \cdots$ and $D = 0\decdot d_1 d_2 \cdots$, where $d_1 = 1$; and denote
\begin{eqnarray}\label{selectDivStach}
\Delta = 0\decdot d_1 d_2 \cdots d_9 \quad \text{and} \quad V = u_0 \decdot u_{1} u_{2} \cdots u_{9} \\
\SelectD(U,D) = \left\{\begin{array}{lll}
    \ \ 1 & & \text{if \ \ } 2V - \Delta > 0 \, , \\
       -1 & & \text{if \ \ } 2V + \Delta < 0 \, ,\\
    \ \ 0 & & \text{otherwise.}
\end{array} \right.
\end{eqnarray}
Since $V$ and $\Delta$ use only a limited number of digits, the values $2V - \Delta$ and $2V + \Delta$ are computable in constant time. In our numeration system, $0$ has only trivial representation, therefore the most significant digit of $2V - \Delta$ and $2V + \Delta$ decides about positivity or negativity. Consequently, $\SelectD$ can be evaluated in constant time.\\

\noindent{\bf Claim 3}: If $\tfrac{U}{D}\in (\beta I)^{\varepsilon/2}$, then $\tfrac{U}{D} - \SelectD(U,D) \in [-\rho +\tfrac12\varepsilon, \rho -\tfrac12\varepsilon]$.

As $\tfrac{U}{D} \in [-\beta \rho - \tfrac12 \varepsilon, \beta \rho + \tfrac12 \varepsilon]$, and by virtue of Claim 2, we have in particular
\begin{equation*}\label{test}
    \tfrac{V}{\Delta} - \tfrac12\varepsilon < \tfrac{U}{D} \leq \beta \rho + \tfrac12 \varepsilon\,.
\end{equation*}
In the sequel, we exploit the fact that we determined  $\rho$ and $\varepsilon$ by \eqref{eps_tau^2}. Our discussion is split into three cases, according to the value $q=\SelectD(U,D) \in \{-1, 0, 1\}$.
\begin{description}
    \item[ $q=1$\,:] \quad By~\eqref{selectDivStach}, we have $\tfrac{V}{\Delta} > \tfrac12$. Thus
        $$
        -\rho +\tfrac12 \varepsilon = -\tfrac12\varepsilon-\tfrac12 < \tfrac{V}{\Delta} -\tfrac12\varepsilon-1 < \tfrac{U}{D} -1 \leq \beta \rho + \tfrac12\varepsilon-1 < \rho - \tfrac12 \varepsilon \, .
        $$
    \item[ $q=0$\,:] \quad Then $-\tfrac12 \leq \tfrac{V}{\Delta} \leq \tfrac12$. Consequently,
        $$
        \tfrac{U}{D} - 0 < \tfrac{V}{\Delta} + \tfrac12 \varepsilon \leq \tfrac12 + \tfrac12 \varepsilon = \rho - \tfrac12 \varepsilon \, .
        $$
        The lower bound for $\tfrac{U}{D} - 0$ and the whole case $q = \overline{1}$ follow by symmetry.
\end{description}

\noindent{\bf Claim 4}: Let $J = (\beta I)^{\varepsilon/2} = [-\beta \rho - \tfrac12 \varepsilon, \beta \rho + \tfrac12 \varepsilon]$. Then $W_0 = 0 \in J$, and for any $k \in \N$ the implication $W_{k} \in D_{k+\delta} J \Rightarrow W_{k+1} \in D_{k+\delta+1} J$ holds. Consequently, $(W_k)$ is bounded if the delay is at least $\delta = 6$.

According to \eqref{W_k_definition}, we have
$$
W_{k+1} = \beta \bigl(W_{k} - q_{k} D_{k+\delta}\bigr) + (n_{k+1+\delta} - Q_{k} d_{k+1+\delta}) \beta^{-\delta} \, ,
$$
where $q_k = \SelectD(W_{k}, D_{k+\delta})$. We give upper bounds on the two previous summands separately. Firstly,
\begin{equation}\label{second}
    \bigl| n_{k+1+\delta} - Q_{k} d_{k+1+\delta} \bigr| \beta^{-\delta} \leq (1+D_{\max}) \beta^{-\delta} \, .
\end{equation}

Secondly, we apply Claim 3, and, due to $D_{k+1+\delta} = D_{k+\delta} + \tfrac{d_{k+1+\delta}}{\beta^{k+1+\delta}} \geq D_{k+\delta} - \tfrac{1}{\beta^{1+\delta}}$, we have
\begin{equation}\label{third}
    \beta \bigl| W_{k}- q_{k} D_{k+\delta} \bigr| \leq \beta \bigl| D_{k+\delta} \bigr| \bigl( \rho - \tfrac12 \varepsilon\bigr) \leq \bigl| D_{k+1+\delta} \bigr| \beta \bigl(\rho -\tfrac{\varepsilon}{2} \bigr) + (\rho -\tfrac12\varepsilon)\beta^{-\delta}.
\end{equation}

Suppose that the inequality
\begin{equation}\label{fourth}
    (1 + D_{\max}) \beta^{-\delta} + (\rho - \tfrac12\varepsilon)\beta^{-\delta} < D_{\min} (\tfrac12\beta \varepsilon + \tfrac12\varepsilon)
\end{equation}
is satisfied; then, by adding inequalities \eqref{second} and \eqref{third}, we obtain
\begin{eqnarray*}
    \bigl| W_{k+1} \bigr| & < & \bigl| D_{k+1+\delta} \bigr| \bigl(\beta\rho - \tfrac12 \beta \varepsilon \bigr) + D_{\min} (\tfrac12 \beta \varepsilon + \tfrac12 \varepsilon) \leq \\
    & \leq & \bigl| D_{k+1+\delta} \bigr| \bigl( \beta \rho - \tfrac12 \beta \varepsilon \bigr) + \bigl| D_{k+1+\delta} \bigr| (\tfrac12\beta \varepsilon + \tfrac12 \varepsilon) = \\
    & = & \bigl| D_{k+1+\delta} \bigr| \bigl(\beta\rho +\tfrac12\varepsilon\bigr) \, .
\end{eqnarray*}

Due to the symmetry of the interval $J$ with respect to $0$, we have $\bigl| D_{k+\delta+1} \bigr| J = D_{k+\delta+1} J$, and thus $W_{k+1} \in \bigl| D_{k+\delta+1} \bigr| J = D_{k+1+\delta} J$. A simple calculation shows that \eqref{fourth} is satisfied if the delay is at least $\delta = 6$.\\

We can summarize: In base $\beta = \frac{3 + \sqrt{5}}{2}$ with alphabet ${\mathcal{A}} = \{-1, 0, 1\}$, on-line division is possible by the Trivedi-Ercegovac algorithm with delay $\delta = 6$, and with linear time complexity. The number of fractional digits to evaluate for $W$ and $D$ within the algorithm is $L = 9$, and another digit before the fractional point for~$W$.


\subsection{Knuth numeration system}

D. E. Knuth showed in 1955 \cite{Knuth} that in the numeration system with base $\beta =2i$ and alphabet $\{0,1,2,3\}$, any complex number $Z$ has a representation of the form $Z = \sum_{j\geq n} z_j \beta^{-j}$, where $n \in \Z$ and $z_j \in \{0, 1, 2, 3\}$. In this numeration system, almost all complex numbers have a unique representation. We consider a redundant system with the same base and a symmetric alphabet ${\mathcal{A}} = \{-2, -1, 0, 1, 2\}$. Let us list the relevant properties of this system:
\begin{itemize}
    \item In $(\beta, {\mathcal{A}})$, parallel addition is possible, see \cite{Frougny2}.
    \item The system $(\beta, {\mathcal{A}})$ has the OL Property, as the oblong $I$ with vertices $\pm \tfrac59 \pm i \tfrac{11}9$ and $\varepsilon = \tfrac{1}{18}$ satisfies \eqref{OL}.
    \item The function $\Digit$ is defined by
        \begin{equation*}
        \Digit(V)= \left\{\begin{array}{lll}
            \ \ 2 & & \text{if \ \ } \Re(V) > \tfrac32 \, , \\
            \ \ 1 & & \text{if \ \ } \Re(V) \in (\tfrac12,\tfrac32] \, , \\
            \ \ 0 & & \text{if \ \ } \Re(V) \in [-\tfrac12,\tfrac12] \, , \\
            -1 & & \text{if \ \  } \Re(V) \in [-\tfrac32,-\tfrac12) \, , \\
            -2 & & \text{if \ \ } \Re(V) < -\tfrac32 \, .
        \end{array} \right.
        \end{equation*}
    \item A number $Z = \sum_{j=1}^{\infty} z_{j} \beta^{-j}$ with $z_j \in {\mathcal{A}}$ can be decomposed into real and imaginary part as follows:
        $$
        Z = \sum_{j=1}^{\infty} z_{j} \beta^{-j} =
        \sum_{j=1}^{\infty} z_{2j}(-4)^{-j} + 2i\sum_{j=1}^{\infty} z_{2j-1}(-4)^{-j} \, .
        $$
        It means that each of the real and imaginary parts can be represented in the real numeration system with base $-4$ and alphabet $\{-2, -1, 0, 1, 2\}$; in this numeration system $0$
        has only the trivial representation.
    \item $D_{\min} = \tfrac16$. It follows from the fact that, if $z_1 \neq 0$, then
        $$
        |Z| = \Bigl| \sum_{j=1}^{\infty} z_{j}\beta^{-j} \Bigr| \geq |\Im(Z)| = 2 \Bigl| \sum_{j=1}^{\infty} z_{2j-1}(-4)^{-j}\Bigr| \geq 2 \tfrac1{12} = 2 \cdot 0 \decdot \overline{1} \, \overline{2} \, 2 \, \overline{2} \, 2 \, \overline{2} \cdots \, .
        $$
\end{itemize}

Using the parameters $\varepsilon$, $D_{\min}$ and the oblong $I$ mentioned above,  the general formulas \eqref{deltatMult} and \eqref{L_calculation} for on-line multiplication give us  the delay $\delta = 9$ and the number $L = 7$ of fractional digits of $W$ to evaluate.

For on-line division, with $K = \max\{|z|: z \in I\} = \tfrac{\sqrt{146}}{9}$, using the general formulas \eqref{DeltaDivision}, \eqref{ALFA} and \eqref{L_calculation} results in the delay $\delta = 11$ and the number $L = 11$ of fractional digits of $W$ and $D$ to evaluate.

In summary, the Knuth numeration system enables on-line multiplication and division with linear time complexity. The preprocessing of divisor is just a shift of the fractional point, due to the non-existence of any non-trivial representation of zero in $(\beta, {\mathcal{A}})$. The size of the set $(\beta I)^{\varepsilon/2}$ and of the alphabet ${\mathcal{A}}$ imply that we need to evaluate another three digits of $W$ and $W/D$ before the fractional point (for on-line multiplication and division, respectively). Any point $W = \sum_{j = n}^\infty w_j \beta^{-j}$ with $w_j \in {\mathcal{A}}$ and $w_n \neq 0$ for $n \leq -3$ would lie outside the set $(\beta I)^{\varepsilon/2}$.


\subsection{Eisenstein numeration system}\label{Eisen}

The Eisenstein numeration system works with a complex base, namely $\beta = -1 + \omega $, where $\omega = \exp{\frac{2i\pi}{3}} $ is the third root of unity, i.e., $\omega^3 = 1$.

It is known that this base $\beta$ with the (so-called canonical) alphabet $\CC = \{ 0, 1, 2\}$ forms a numeration system, in which any complex number has a $(\beta, \CC)$-representation. The same property is true also for some other alphabets of cardinality $\#\CC = 3$, for example $\CC = \{ 0, 1, -\omega \}$. It follows from Theorem 3.2 in \cite{NielsenKornerup}.

Nevertheless, we choose to work with a larger, redundant (complex) alphabet ${\mathcal{A}}$ of size $\#{\mathcal{A}} = 7$:
$$
{\mathcal{A}} = \{0, \pm 1, \pm \omega, \pm \omega^2 \} \quad \ \text{with } A = \max \{ |a| \, : \, a \in {\mathcal{A}} \} = 1 \,.
$$
The numeration system $(\beta, {\mathcal{A}})$ using this alphabet has favorable properties:
\begin{itemize}
    \item the alphabet ${\mathcal{A}}$ is not just (centrally) symmetric, but also closed under multiplication;
    \item the numeration system $(\beta, {\mathcal{A}})$ enables parallel addition (and subtraction), and $\#{\mathcal{A}} = 7$ is the minimal size of alphabet allowing parallel addition for the Eisenstein base (a result to be published);
    \item there are non-trivial representations of zero in $(\beta, {\mathcal{A}})$, nevertheless, the preprocessing of divisor for on-line division is possible (as discussed in Example \ref{EisenPrepro}), and for a divisor $D$ ensures that
        $$
            \frac{\sqrt{3}(6-\sqrt{7})}{18} = D_{\min} \leq | D| \leq D_{\max} = \frac{\sqrt{7}}{2} \, .
       $$
\end{itemize}
Due to these properties, the Eisenstein numeration system with alphabet ${\mathcal{A}} = \{0, \pm 1, \pm \omega, \pm \omega^2\}$ allows on-line multiplication and division, as shown below.


\subsubsection{OL Property of Eisenstein numeration system}

For each digit $a \in {\mathcal{A}}$, we denote the set
\begin{equation*}
    H_a = \{ z \in \C : |z-a| \leq |z-b|\ \ \text{for all } b \in {\mathcal{A}}, b \neq a\} \, .
\end{equation*}

The sets $H_a$ for $a \neq 0$ are unbounded, while the set $H_0$ is the regular hexagon with center in point zero and with vertices $\pm \frac{1}{2} \pm i \frac{\sqrt{3}}{6}$ and $\pm i \frac{\sqrt{3}}{3}$.

It can be easily verified that $r = \frac{\sqrt{3}}{6}$ is the maximum possible value $r > 0$ such that
\begin{equation*}
    (\beta H_0)^{r} \subset \bigcup_{a \in {\mathcal{A}}} (H_0 + a) \, .
\end{equation*}

We work with the following $\Digit$ function:
\begin{equation*}
    \Digit(V) = a \quad \Rightarrow \quad V \in H_a \, .
\end{equation*}

\noindent Using the parameter $r = \frac{\sqrt{3}}{6}$, we can set $\varepsilon > 0$ as $\varepsilon = \frac{r}{|\beta| + 1} = \frac{3-\sqrt{3}}{12}$. Figure~\ref{Eisenstein_picture} shows that the OL Property is fulfilled with the set $I = (H_0)^\varepsilon$. Nevertheless, we modify our approach, in order to obtain optimal values for the delay $\delta$ and the number $L$ of fractional digits of arguments to evaluate in the function $\Select$.


\begin{figure}[h]\label{E}
    \centering
   \includegraphics[scale=0.45]{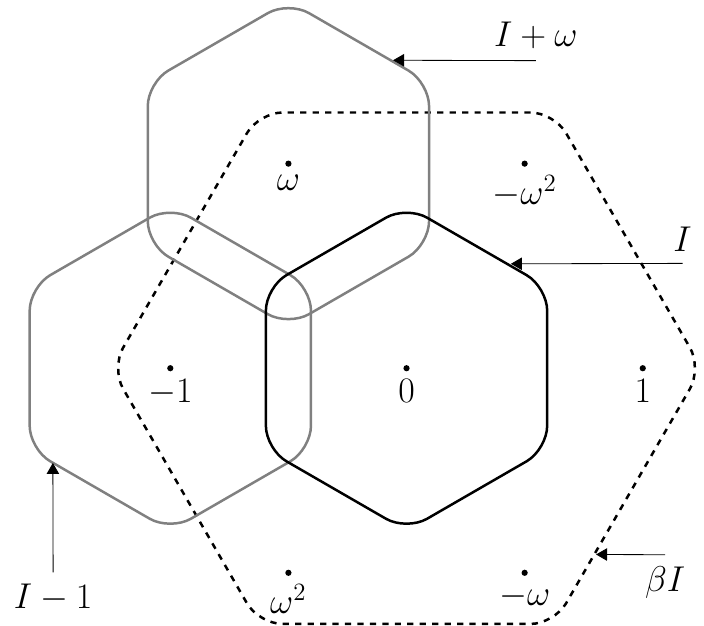}
    \caption{Eisenstein numeration system with  base $\beta = -1 + \omega$ and alphabet ${\mathcal{A}} = \{0, \pm 1, \pm \omega, \pm \omega^2\}$, where $\omega$ is the third root of unity,  fulfills the OL Property, due to the ``rounded hexagon" set $I$ illustrated hereby, see Example~\ref{Eisen}.
    \label{Eisenstein_picture}}
\end{figure}



\subsubsection{On-line multiplication in Eisenstein numeration system}

We consider two parameters $\mu, \nu > 0$ such that
\begin{equation}\label{m+n=r}
    \sqrt{3} \mu + \nu = |\beta| \mu + \nu \leq r = \frac{\sqrt{3}}{6} \, .
\end{equation}

This ensures that the set $J = (H_0)^\mu$ has the property
$$
(\beta J)^\nu = (\beta (H_0^\mu))^\nu = \beta H_0^{|\beta| \mu + \nu} \subset \beta H_0^r \subset \bigcup_{a \in {\mathcal{A}}} (H_0 + a) \, , \quad \text{and also}
$$
\begin{equation*}
    ((\beta J)^\nu)^\varphi \subset \bigcup_{a \in {\mathcal{A}}} (H_0 + a)^\varphi \quad \text{for any} \quad \varphi > 0 \, .
\end{equation*}

The selection function for multiplication $\SelectM : (\beta H_0)^r \rightarrow {\mathcal{A}}$ is defined by
$$
\SelectM(W) = \Digit(\Trunc_{\mu /2}(W)) \, ,
$$
implying that $ W - \SelectM(W) \in (H_0)^\mu $ for any $W \in (\beta H_0)^r $.

This is due to the fact that $|W - \Trunc_{\mu /2}(W)| < \frac{\mu}{2}$, and $W \in (\beta H_0)^r$ implies $V = \Trunc_{\mu /2}(W) \in (\beta H_0)^{r + \mu/2} \subset \bigcup_{a \in {\mathcal{A}}} (H_0 + a)^{\mu/2}$. Consequently, $V - \Digit(V) \in (H_0)^{\mu/2}$, and we finally obtain $W - \SelectM(W) = (W - V) + (V - \SelectM(W)) = (W-V) + (V - \Digit(V)) \in ((H_0)^{\mu/2})^{\mu/2} = (H_0)^\mu$.

In the algorithm of on-line multiplication, we perform the following iterative steps:
\begin{itemize}
    \item Put $a = \SelectM(W)$, and thus $W - a  \in (H_0)^\mu$; so $a$ is the output digit for the currently processed position;
    \item Set $W_{\new} = \beta (W - \SelectM(W)) + (xY + yX)$, where $xY + yX$ is contribution of the input operands for the next processed position; wherein
        \begin{equation*}
        W \in (\beta H_0)^r \quad \text{implies} \quad W_{\new} \in (\beta H_0)^r \, ,
        \end{equation*}
\end{itemize}
provided that $|xY+yX| \leq \nu$, and $|\beta| \mu + \nu \leq r$ (as demanded in \eqref{m+n=r}). This is readily seen, as $W - a \in (H_0)^\mu$, so $\beta(W - a) \in \beta (H_0)^\mu = (\beta H_0)^{|\beta|\mu}$, and we require $xY+yX \in (0)^\nu$; altogether
\begin{equation}\label{W_new_multiplication}
    W_{\new} = \beta (W - \SelectM(W)) + (xY+yX) \in (\beta H_0)^{|\beta|\mu} + (0)^\nu = (\beta H_0)^{|\beta|\mu + \nu} \subset (\beta H_0)^r \, .
\end{equation}

\noindent From the formulas and requirements above, we deduce the conditions determining the desired parameters $\delta$ and $L$:
\begin{itemize}
    \item From $|xY+yX| \leq \nu$, we obtain a limitation for the delay $\delta$
        \begin{equation*}
            \nu \geq \frac{2 A D_{\max}}{|\beta|^\delta} = \frac{\sqrt{7}}{\sqrt{3}^\delta} \, .
        \end{equation*}
    \item The number $L$ of fractional digits to be evaluated from the expression $W$ is limited by
        \begin{equation*}
            \mu \geq \frac{2 D_{\max}}{|\beta|^L} = \frac{\sqrt{7}}{\sqrt{3}^L} \, .
        \end{equation*}
\end{itemize}
At the same time, we have to maintain the inequality $\sqrt{3}\mu + \nu = |\beta|\mu + \nu \leq r = \frac{\sqrt{3}}{6}$ --- so the bigger part of $r$ we dedicate to $\delta$ via $\nu$, the lesser part remains for $L$ via $\mu$. Depending on this distribution, we find two reasonable combinations of the parameters $L$ and $\delta$ in the algorithm of on-line multiplication in Eisenstein numeration system:
\begin{itemize}
    \item $(\delta_{\min}, L) = (5, 7)$, where in the delay $\delta$ is minimized; and
    \item $(\delta, L_{\min}) = (6, 6)$, where the parameter $L$ is minimized.
\end{itemize}


\subsubsection{On-line division in Eisenstein numeration system}

When specifying the algorithm for on-line division, we use again the general formula \eqref{DeltaDivision}. The $\Trunc$ function provides partial evaluations $V = \Trunc_\alpha(W)$ and $\Delta = \Trunc_\alpha(D)$, where the parameter $\alpha > 0$ is set so that $\left| \frac{W}{D} - \frac{V}{\Delta} \right| \leq \frac{\mu}{2}$. We set another auxiliary parameter
\begin{equation}
    K = \max \{ |z| \, : \, z \in H_0 \} = \frac{\sqrt{3}}{3}\, .
\end{equation}

During the course of the iterations of the algorithm, it shows that
\begin{equation}
    W \in D (\beta H_0)^\nu \quad \text{ implies } \quad W_{new} \in D_{new} (\beta H_0)^\nu \, ,
\end{equation}
provided that $\mu, \nu >0$ fulfill~\eqref{m+n=r}. The inequalities translating relations between parameters $\mu, \nu$ and the desired results $\delta$ and $L$ are somewhat more laborious here than in the case of on-line multiplication:
\begin{equation}
    \nu \geq \frac{A(D_{max} + 1 + K + \mu)}{D_{min} |\beta|^\delta} \quad \text {and} \quad \mu \geq \frac{2 D_{max} (|\beta| K + r+ 1)}{D_{min} |\beta|^L} \, .
\end{equation}

Depending on distribution of the value $r$ between $\mu$ and $\nu$, according to~\eqref{m+n=r}, we obtain two reasonable combinations of the parameters $L$ and $\delta$ in the on-line division algorithm for the Eisenstein numeration system:
\begin{itemize}
    \item $(\delta_{\min}, L) =  (7, 10)$, where the delay $\delta$ is minimized; and
    \item $(\delta, L_{\min}) = (10, 9)$, where the parameter $L$ is minimized.
\end{itemize}


\section{Conclusion}\label{conclusion}

It is known that many continuous functions of  real variables can be calculated by an on-line algorithm in a redundant numeration system. For a  precise definition of redundancy of a numeration system, formalization of on-line computation and results, see \cite[Chapter 2]{Kurka}. In particular, multiplication and division are on-line computable. However, this general result does not provide any effective algorithm for calculation. The exceptionality of the algorithms due to Trivedi and Ercegovac consists in their linear time complexity, i.e., the number of steps needed to compute the first $n$ most significant digits of the result is $\mathcal{O}(n)$.

These algorithms were originally introduced for numeration systems $(\beta, A)$ where $\beta$ is a natural integer. We have shown that they can be extended to real or complex systems as well, provided that $(\beta, {\mathcal{A}})$ has the OL Property. Investigating the OL Property and defining the preprocessing rules for a given system $(\beta, {\mathcal{A}})$ remains an open problem, particularly if we want to use a digit set ${\mathcal{A}}$ minimal in size. On several examples we have demonstrated that the existence of convenient preprocessing rules, together with parallel addition and subtraction, implies linear time complexity for both algorithms. Recently  in \cite{FrPe}, we described bases for which  symmetric alphabets of consecutive  integers allow preprocessing.   Nevertheless, identifying the numeration systems for which the algorithms of Trivedi-Ercegovac work in linear time, need a deeper study.


\section*{Acknowledgments}

The first  author acknowledges support of ANR/FWF project ``FAN",  E.P.   financial support of project no. CZ.02.1.01/0.0/0.0/16\_019/0000778, E.P. and M.S.  support of GA\v CR 13-03538S,   M.P. support of SGS 11/162/OHK4/3T/14.



\end{document}